\newtheorem{theorem}{Theorem}
\newtheorem{lemma}[theorem]{Lemma}
\newenvironment{definition}{\begin{trivlist}\item[]{\bf Definition}\ }%
{\end{trivlist}}
\newenvironment{proof}{\begin{trivlist}\item[]{\bf Proof\ }}%
{\end{trivlist}}
\def\lorunary{{\mathop{\bigvee}}}           
\def\falseBPO#1{{\left(\lorunary{#1}\right)}}
\def\falseBPOpi{{\falseBPO{\overline\pi}}}
\def\falseBPOpisub#1{{\falseBPO{\overline\pi_{#1}}}}
\newcommand{\qed}{\Box}			
\newcommand{\rddots}{{\mathinner{\mkern1mu\raise1pt\vbox{\kern7pt\hbox{.}}%
        \mkern2mu\raise4pt\hbox{.}\mkern2mu\raise7pt\hbox{.}\mkern1mu}}}
\newcommand{\proofdots}{{\ddots\vdots\,\rddots}}
\def\pprime{{\prime\prime}}
\def\alphabeta{{\alpha\beta}}
\def\negA{{\overline A}}
\def\negB{{\overline B}}
\def\negE{{\overline E}}
\def\negF{{\overline F}}
\def\negT{{\overline T}}
\def\negU{{\overline U}}
\def\negV{{\overline V}}
\def\negW{{\overline W}}
\def\negx{{\overline x}}
\def\negy{{\overline y}}
\def\koplus{{k\oplus}}
\def\qmark{{\hbox{\bfseries \sf ?}}}
\def\GT{{\mbox{\textup{GT}}}}
\def\GGT{{\mbox{\textup{GGT}}}}
\def\Peb{{\mbox{\textup{Peb}}}}
\def\GPeb{{\mbox{\textup{GPeb}}}}
\def\caseiti{{(\kern-1pt{\it i})}}
\def\caseitii{{(\kern-1pt{\it ii})}}
\def\caseitiii{{(\kern-1pt{\it iii})}}
\def\caseitiv{{(\kern-1pt{\it iv})}}
\def\caseitip{{(\kern-1pt{\it i'})}}
\def\caseitiip{{(\kern-1pt{\it ii'})}}
\def\caseitiiip{{(\kern-1pt{\it iii'})}}
\def\caseitivp{{(\kern-1pt{\it iv'})}}
\def\rest{{\upharpoonright}}
\begin{document}

\title{Improved Separations of Regular Resolution
from Clause Learning Proof Systems  \\
{\em \normalsize Preliminary version. Comments appreciated.}}

\author{
Maria Luisa Bonet\thanks{Supported in part by grant TIN2010-20967-C04-02.}\\
\small Lenguajes y Sistemas Inform{\'a}ticos \\
\small Universidad Polit{\'e}cnica de Catalu{\~n}a \\
\small Barcelona, Spain \\
\small \tt bonet@lsi.upc.edu
\and
Sam Buss\thanks{Supported in part by
NSF grants \hbox{DMS-0700533} and \hbox{DMS-1101228}, and by a grant
from the Simons
Foundation (\#208717 to Sam Buss).
The second author thanks the John Templeton Foundation for supporting
his participation in the CRM Infinity Project
at the Centre de Recerca Matem\`atica,
Barcelona, Catalonia, Spain during which
some of these results were obtained.}~${}^\ddag$\\
\small Department of Mathematics \\
\small University of California, San Diego\\
\small La Jolla, CA 92093-0112, USA\\
\small \tt sbuss@math.ucsd.edu
\and
Jan Johannsen\thanks{We thank the Banff International Research Station
for a workshop on proof complexity
held October 2011 during which part of these results were obtained.}  \\
{\small Institut f\"ur Informatik} \\
{\small Ludwig-Maximilians Universit\"at} \\
{\small D-80538 M\"unchen, Germany } \\
{\small \tt jan.johannsen@ifi.lmu.de }
}

\maketitle

\begin{abstract}
We prove that the graph tautology formulas of
Alekhnovich, Johannsen, Pitassi, and Urquhart
have polynomial size pool resolution refutations that use
only input lemmas as learned clauses and without degenerate
resolution inferences.  We also prove that these
graph tautology formulas can be refuted
by polynomial size DPLL proofs with
clause learning, even when restricted to greedy,
unit-propagating DPLL
search.  We prove similar results for the
guarded, xor-fied pebbling tautologies which
Urquhart proved are hard for regular resolution.
\end{abstract}

\section{Introduction}

The problem SAT of deciding the satisfiability of propositional CNF formulas
is of great theoretical and practical interest. Even though it is
NP-complete, industrial instances with hundreds of thousands variables
are routinely solved
by state of the art SAT solvers.  Most of these solvers are
based on the DPLL procedure~\cite{DLL:theoremproving}
extended with clause learning, restarts, variable
selection heuristics, and other techniques.

The basic DPLL procedure
without clause learning is equivalent to tree-like
resolution.  The addition of clause learning makes DPLL considerably
stronger.  In fact, clause learning together with unlimited
restarts is capable of simulating general resolution
proofs~\cite{PipatsrisawatDarwiche:clauselearning}.
However, the
exact power of DPLL with clause learning but without restarts
is unknown.  This question
is interesting not only for theoretical
reasons, but also because of the potential for better understanding
the practical performance of
various refinements of DPLL with clause learning.

Beame, Kautz, and Sabharwal~\cite{BKS:clauselearning} gave
the first
theoretical analysis of DPLL
with clause learning.
Among other things,
they noted that clause learning with restarts simulates
general resolution.  Their construction required the
DPLL algorithm to ignore some contradictions,
but this situation was rectified
by Pipatsrisawat and
Darwiche~\cite{PipatsrisawatDarwiche:clauselearning} who showed
that SAT solvers which do not ignore contradictions can also
simulate resolution.   These techniques were also applied
to learning bounded width clauses by Atserias~et~al.~\cite{AFT:clauseLearning}.

Beame et al.~\cite{BKS:clauselearning} also studied DPLL clause
learning without restarts.  Using a method of ``proof trace extensions'',
they were able to show that DPLL with clause learning
and no restarts is strictly
stronger than any ``natural'' proof system strictly weaker than
resolution.  Here, a {\em natural} proof system is one in which
proofs do not increase in length substantially when variables are restricted
to constants.  The class of natural proof systems is known to
include common proof systems such as tree-like or regular proofs.
The proof trace method involves introducing extraneous variables
and clauses, which have the effect of giving
the clause learning DPLL algorithm more freedom in choosing
decision variables for branching.

There have been two approaches to formalizing DPLL with clause
learning as a static proof system rather
than as a proof search algorithm.  The first is
pool resolution with a degenerate resolution inference, due
originally to Van Gelder~\cite{VanGelder:PoolResolution} and
studied further by Hertel et al.~\cite{HBPvG:clauselearn}.
Pool resolution requires proofs to have a depth-first regular
traversal similarly to the search space of a DPLL algorithm.
Degenerate resolution allows resolution inferences in which one or both of
the hypotheses may be lacking occurrences of the resolution
literal.  Van Gelder argued that pool resolution with degenerate
resolution inferences simulates a wide range of DPLL algorithms
with clause learning.  He also gave a proof,
based on~\cite{AJPU:regularresolution}, that pool
resolution with degenerate inferences is stronger than
regular resolution, using extraneous
variables similar to proof trace extensions.

The second approach is due to
Buss~et~al.~\cite{BHJ:ResTreeLearning}
who introduced a ``partially degenerate'' resolution
rule called w-resolution, and a proof system regWRTI
based on w-resolution and clause learning of
``input lemmas''.  They proved that
regWRTI exactly captures non-greedy DPLL with clause learning.
By ``non-greedy'' is meant that contradictions may need to be
ignored by the DPLL search.

Both \cite{HBPvG:clauselearn} and~\cite{BHJ:ResTreeLearning}
gave improved versions of the proof trace extension method
so that the extraneous variables depend only on the
set of clauses being refuted and not on resolution
refutation of the clauses.
The drawback remains, however, that the proof trace extension
method gives contrived sets of clauses and contrived resolution
refutations.

It remains open whether any of
DPLL with clause learning,
pool resolution (with or without
degenerate inferences),
or the regWRTI proof system can polynomially
simulate general resolution.
One approach to answering
these questions is to try to separate pool resolution (say)
from general resolution.
So far, however, separation results are
known only for the weaker system of regular
resolution, based on work of
Alekhnovich et al.~\cite{AJPU:regularresolution},
who gave an exponential separation between
regular resolution and general resolution.
Alekhnovich et al.~\cite{AJPU:regularresolution}
proved this separation for two families
of tautologies, variants of the graph tautologies~$\GT^\prime$ and
the ``Stone'' pebbling tautologies.
Urquhart~\cite{Urquhart:regularresolution} subsequently
gave a related separation using a different set
of pebbling tautologies which he denoted denoted~$\Pi_i$.\footnote{%
Huang and Yu~\cite{HuangYu:regularresolution} also gave a
separation of regular resolution and general resolution, but
only for a single set of clauses.
Goerdt~\cite{Goerdt:regularresolution}
gave a quasipolynomial separation of regular resolution and
general resolution.}
In the present paper, we call the tautologies~$\GT^\prime$
the {\em guarded} graph tautologies, and henceforth denote
them $\GGT$ instead of $\GT^\prime$; their definition is given
in Section~\ref{PrelimsSect}.
We define the formulas $\GPeb^{\koplus}(G)$ in Section~\ref{sec:GPeb};
these are essentially the $\Pi_i$ tautologies
of Urquhart.

An obvious question is whether
pool resolution (say) has polynomial size proofs of
the $\GGT$ tautologies, the $\GPeb^{\koplus}$, or the Stone tautologies.
The present paper resolves the first two questions
by showing that pool resolution does indeed have polynomial
size proofs of the graph tautologies~$\GGT$ and the
pebbling tautologies $\GPeb^{\koplus}$.
Our refutations avoid the use of extraneous variables in the style
of proof trace extensions;
furthermore, they use only the traditional
resolution rule and do not require
degenerate resolution inferences or w-resolution
inferences.   In addition, we use only learning of input clauses;
thus, our refutations are also regWRTI
proofs (and in fact regRTI proofs) in the terminology
of~\cite{BHJ:ResTreeLearning}.
As a corollary of the characterization of regWRTI
by~\cite{BHJ:ResTreeLearning},
the $\GGT$ principles and the $\GPeb^{\koplus}$ principles
have polynomial size refutations that can be found by
a DPLL algorithm with clause learning and without restarts (under
the appropriate variable selection order).

The Stone principles have recently been shown to also
have regRTI refutations by the second author and
L.~Ko{\l}odziejczyk~\cite{BussKolodziejczyk:SmallStone};
however, their
proof uses a rather different method than we use below.
Thus, none of the three principles separate clause learning
DPLL from full resolution.
It is natural to speculate that
perhaps pool resolution or regWRTI can simulate
general resolution, or that DPLL with clause learning and without
restarts can simulate general resolution.  It is far from clear
that this is true, but, if so, our methods and
those of~\cite{BussKolodziejczyk:SmallStone} may
represent a step in this direction.

The outline of the paper is as follows.  Section~\ref{PrelimsSect}
begins with the definitions of resolution, degenerate resolution, and
w-resolution, and then regular, tree, and pool resolution.
After that, we define the
graph tautologies $\GT_n$ and the guarded versions $\GGT_n$,
and state the main theorems about proofs
of the $\GGT_n$ principles.  Section~\ref{MainPfSect} gives
the proof of these theorems.  Several ingredients are
needed for the proof.
The first idea is to try to
follow the regular refutations of the
graph tautology clauses $\GT_n$ as given by
St{\aa}lmarck~\cite{Stalmarck:trickyformulas}
and Bonet and Galesi~\cite{BonetGalesi:ResAndPolyCalculus}:
however, these refutations
cannot be used directly since the transitivity clauses
of $\GT_n$ are ``guarded'' in the $\GGT_n$ clauses
and this yields refutations which
violate the regularity/pool property.  So, the
second idea is that
the proof search process branches as needed to learn
transitivity clauses.  This generates additional clauses
that must be proved: to handle these, we develop a notion
of ``bipartite partial order'' and show that the
refutations of
\cite{Stalmarck:trickyformulas,BonetGalesi:ResAndPolyCalculus}
can still be used in the presence of a bipartite partial order.
The tricky part is to be sure that exactly the right set of
clauses is derived by each subproof.
Some straightforward bookkeeping
shows that the resulting proof is polynomial size.

Section~\ref{GreedySect} discusses how to modify
the refutations constructed in Section~\ref{MainPfSect}
so that they are ``greedy'' and ``unit-propagating''.
These conditions means that proofs cannot ignore contradictions,
nor contradictions that can be obtained by unit propagation.
The
greedy and unit-propagating conditions correspond well to actual
implemented DPLL proof search algorithms, since they backtrack whenever
a contradiction can be found by unit propagation.  Section~\ref{GreedySect}
concludes with an explicit description of a polynomial time
DPLL clause learning
algorithm for the $\GGT_n$ clauses.

Section~\ref{sec:GPeb} gives the pool resolution and
regRTI refutations of the $\GPeb^\koplus$ principles.  The proof
mimics to a certain extent the methods of Section~\ref{MainPfSect},
but must also deal with the complications of xor-ification.

This paper is an expansion of an
extended abstract~\cite{BonetBuss:poolVsRegularSAT} and
an unpublished preprint~\cite{BonetBuss:poolVsRegularArxiv} by the
first two authors.
These earlier versions included only the results
for the $\GGT$ tautologies
and did not consider the $\GPeb$ principles.

We are grateful to J.\ Hoffmann for assisting with a correction
to an earlier version of the proof of Theorem~\ref{regRtiGgtThm}.
We also thank A.\ Van Gelder, A.\ Beckmann, and T.\ Pitassi for
encouragement, suggestions, and useful comments.

\section{Preliminaries and guarded graph tautologies}\label{PrelimsSect}

Propositional formulas are defined over a
set of variables and the connectives $\wedge$, $\vee$ and $\neg$.
We use the notation $\negx$
to express the negation~$\lnot x$ of~$x$. A {\em literal} is either
a variable~$x$ or a negated variable~$\negx$.
A {\em clause}~$C$ is a set of literals, interpreted
as the disjunction of its members.
The empty clause,~$\Box$, has truth value {\em False}.
We shall only use formulas
in {\em conjunctive normal form}, CNF; namely, a formula will be
a set (conjunction) of clauses.  We often use disjunction~($\lor$),
union~($\cup$), and comma ($,$) interchangeably.

\begin{definition}
The various forms of resolution
take two clauses $A$ and $B$ called the {\em premises}
and a literal $x$
called the {\em resolution variable},
and produce a new clause $C$ called the {\em resolvent}.
\begin{prooftree}
\AxiomC{A} \AxiomC{B}
\BinaryInfC{C}
\end{prooftree}
In all cases below, it is required that $\negx \notin A$ and
$x\notin B$.
The different forms of resolution are:
\begin{description}
\item {\em Resolution rule.}  The hypotheses have the forms
$A := A'\lor x$ and $B:=B'\lor \, \overline{x}$.
The resolvent~$C$ is $A'\lor B'$.
\item {\em Degenerate resolution rule.} \cite{HBPvG:clauselearn,VanGelder:PoolResolution}
If $x\in A$ and $\overline{x}\in B$,
we apply the resolution rule to obtain~$C$.
If $A$ contains~$x$, and $B$ doesn't contain~$\overline x$,
then the resolvent $C$ is $B$.
If $A$ doesn't contain~$x$,
and $B$ contains~$\overline{x}$,
then the resolvent~$C$ is~$A$.
If neither $A$ nor~$B$ contains the literal $x$ or~$\overline x$,
then $C$ is the lesser of $A$ or~$B$ according to some tiebreaking
ordering of clauses.
\item{\em w-resolution rule.} \cite{BHJ:ResTreeLearning}
From $A$ and~$B$ as above, we infer
$C:=(A\setminus\{x\})\lor(B\setminus\{\overline{x}\})$.
If the literal $x\notin A$ (resp., $\overline x\notin B$), then it
is called a {\em phantom literal} of $A$ (resp.,~$B$).
\end{description}
\end{definition}

\begin{definition}
A {\em resolution derivation}, or {\em proof}, of a clause~$C$
from a CNF formula~$F$
is a sequence of clauses $C_1,\ldots,C_s$ such that $C=C_s$
and such that each clause
from the sequence is either a clause from~$F$ or
is the resolvent of two previous clauses.
If the derived clause,~$C_s$, is the empty clause, this is
called a {\em resolution refutation} of~$F$.
The more general systems of
degenerate and w-resolution refutations are defined similarly.
\end{definition}

We represent a derivation
as a directed acyclic graph (dag) on the
vertices $C_1,\ldots,C_s$,
where each clause from~$F$ has out-degree~$0$,
and all the other vertices from $C_1,\ldots,C_s$ have edges pointing
to the two clauses from which they were derived.
The empty clause has in-degree~$0$.
We use the terms ``proof'' and ``derivation'' interchangeably.

Resolution is sound and complete
in the refutational sense: a CNF
formula~$F$ has a refutation if
and only if $F$ is unsatisfiable, that is, if and only
if $\lnot F$ is a tautology.
Furthermore, if there is a derivation of a clause~$C$
from~$F$,
then $C$ is a consequence of~$F$;
that is, for every truth assignment~$\sigma$,
if $\sigma$ satisfies $F$ then it satisfies~$C$.
Conversely, if $C$ is a consequence of $F$ then
there is a derivation
of some $C'\subseteq C$ from~$F$.

A resolution refutation is {\em regular} provided that,
along any path in the directed acyclic graph,
each variable is resolved on at most once.  A resolution
derivation of a clause~$C$ is {\em regular} provided
that, in addition, no variable appearing in~$C$ is used as
a resolution variable in the derivation.
A refutation is {\em tree-like} if the underlying graph is a tree;
that is, each occurrence of a clause occurring in the refutation
is used at most once as a
premise of an inference.

We next define a version of pool resolution,
using the conventions of~\cite{BHJ:ResTreeLearning} who called
this ``tree-like regular resolution with lemmas'' or ``regRTL''.
The idea is that clauses obtained previously in the proof
can be used freely as learned lemmas.
To be able to talk about clauses previously obtained,
we need to define an ordering of clauses.

\begin{definition}
Given a tree $T$, the {\em postorder}
ordering $<_T$ of the nodes is defined as follows:
if $u$ is a node of~$T$,
$v$~is a node in the subtree rooted at the left child of~$u$,
and $w$~is a node in the subtree rooted at the right child
of~$u$, then $v<_T w<_T u$.
\end{definition}

\begin{definition}
A {\em pool resolution} proof (also called a regRTL proof)
from a set of initial clauses~$F$
is a resolution proof tree~$T$
that fulfills the following conditions:
(a)~each leaf is labeled with either a clause of~$F$ or a clause
(called a ``lemma'')
that appears earlier in the tree in the $<_T$ ordering;
(b)~each internal node is labeled with a clause and a literal,
and the clause is obtained by resolution
from the clauses labeling the node's children
by resolving on the given literal;
(c)~the proof tree is regular;
(d)~the roof is labeled with the conclusion clause.
If the labeling of the root is the empty
clause $\Box$, the pool resolution proof
is a {pool refutation}.
\end{definition}

The notions of {\em degenerate pool resolution} proof and
{\em pool w-resolution} proof are
defined similarly, but allowing degenerate resolution or w-resolution
inferences, respectively.
The two papers
\cite{VanGelder:PoolResolution,HBPvG:clauselearn} defined pool
resolution to be the degenerate pool resolution system, so our notion
of pool resolution is more restrictive than theirs.  Our definition
is equivalent to the one in~\cite{Buss:poolhard}, however.  It is
also equivalent to the system regRTL defined in~\cite{BHJ:ResTreeLearning}.
Pool w-resolution is the same as
the system regWRTL of~\cite{BHJ:ResTreeLearning}.

A ``lemma'' in clause~(a) of the above definition
is called an {\em input lemma} if it is derived by {\em input}
subderivation, namely by a subderivation
in which each inference has at least one
hypothesis which is a member of~$F$ or is a lemma.
The notion of input lemma was first introduced
by~\cite{BHJ:ResTreeLearning}.  In their terminology,
a pool resolution proof which uses only input lemmas,
is called a regRTI proof.  Likewise a regWRTL
proof that uses only input lemmas is called a
regWRTI proof.

To understand the nomenclature; ``reg''
stands for ``regular'', ``W'' for ``w-resolution'',
``RT'' for ``resolution tree'', ``L'' for lemma,
and ``I'' for ``input lemma''.

Next we define various graph tautologies, sometimes also
called ``ordering principles''.  They will all
use a size parameter~$n>1$, and variables $x_{i,j}$ with $i,j\in[n]$
and $i\not= j$, where $[n] = \{0,1,2,\ldots,n{-}1\}$.   A variable~$x_{i,j}$
will intuitively represent the condition that $i\prec j$ with $\prec$
intended to be a total, linear order.  We will thus
always adopt the simplifying
convention that $x_{i,j}$ and~$\overline x_{j,i}$ are
the identical literal,
i.e., only the variables~$x_{i,j}$ for $i<j$ actually exist, and
$x_{j,i}$ for $j<i$ is just a notation for~$\overline x_{i,j}$, and
$\overline x_{j,i}$ stands for~$x_{i,j}$.
This identification makes no essential difference to
the complexity of proofs of the tautologies,
but it reduces the number of literals and clauses,
and simplifies the definitions.  In particular, it means there
are no axioms for the antisymmetry or totality of~$\prec$.

The following principle is based on the tautologies defined by
Krishnamurthy \cite{Krishnamurthy:trickyformulas}.
These tautologies, or similar ones,
have also been studied by \cite{Stalmarck:trickyformulas,%
BonetGalesi:ResAndPolyCalculus,%
AJPU:regularresolution,%
BeckmannBuss:dLK,%
SBI:SwitchingkDnf,%
VanGelder:InputCoverNumber,%
Johannsen:widthlearning}.

\begin{definition}
Let $n>1$.  Then $\GT_n$ is the following set of
clauses involving the variables $x_{i,j}$, for $i,j\in [n]$ with $i\not= j$.
\begin{enumerate}
\item[($\alpha_\emptyset$)]
The clauses $\bigvee_{j \not= i} x_{j,i}$, for each
value $i<n$.
\item[($\gamma_\emptyset$)] The {\em transitivity clauses} $T_{i,j,k}:=
\overline x_{i,j} \lor \overline x_{j,k} \lor \overline x_{k,i}$
for all distinct $i,j,k$ in $[n]$.
\end{enumerate}
\end{definition}

Note that the clauses $T_{i,j,k}$, $T_{j,k,i}$ and $T_{k,i,j}$ are identical.
For this reason Van Gelder \cite{VanGelder:PoolResolution}
uses the name ``no triangles'' (NT) for a similar principle.

The next definition is from~\cite{AJPU:regularresolution}, who
used the notation $\GT^\prime_n$.  They
used particular functions $r$ and~$s$ for their lower bound proof,
but since our upper bound proof does not depend on
the details of $r$ and~$s$
we leave them unspecified. We require that $r(i,j,k)\not=s(i,j,k)$ and that
the set $\{r(i,j,k),s(i,j,k)\}\not\subset\{i,j,k\}$.
In addition, w.l.o.g.,
$r(i,j,k)=r(j,k,i)=r(k,i,j)$, and similarly for~$s$.

\begin{definition}  Let $n\ge 1$, and let $r(i,j,k)$ and $s(i,j,k)$ be
functions mapping $[n]^3 \rightarrow [n]$ as above.  The {\em guarded
graph tautology} formula $\GGT_n$ consists of the following
clauses:
\begin{enumerate}
\item[($\alpha_\emptyset$)]
The clauses $\bigvee_{j \not= i} x_{j,i}$, for each
value $i<n$.
\item[($\gamma^\prime_\emptyset$)] The {\em guarded} transitivity clauses
$T_{i,j,k}\lor x_{r,s}$
and $T_{i,j,k}\lor \overline x_{r,s}$,
for all distinct $i,j,k$ in $[n]$, where $r=r(i,j,k)$ and $s=s(i,j,k)$.
\end{enumerate}
\end{definition}
Note that the $\GGT_n$ clauses depend on the functions $r$ and $s$;
this is suppressed in the notation.
Our main result for the guarded
graph tautologies is:
\begin{theorem}\label{PoolResGgtThm}
The guarded graph tautology formulas $\GGT_n$ have
polynomial size pool (regRTL) resolution refutations.
\end{theorem}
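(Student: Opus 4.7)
The plan is to build a polynomial size regRTI refutation of $\GGT_n$ by adapting the classical regular refutation of the ordinary graph tautology $\GT_n$ due to St{\aa}lmarck~\cite{Stalmarck:trickyformulas} and Bonet and Galesi~\cite{BonetGalesi:ResAndPolyCalculus}. The only difference between $\GT_n$ and $\GGT_n$ is that each transitivity clause $T_{i,j,k}$ is available only in two guarded forms $T_{i,j,k}\lor x_{r,s}$ and $T_{i,j,k}\lor \overline x_{r,s}$, so each use of $T_{i,j,k}$ in the old refutation needs an auxiliary resolution step on the guard variable $x_{r,s}$, where $r=r(i,j,k)$ and $s=s(i,j,k)$. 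Done naively, this breaks regularity, because $x_{r,s}$ may already have been resolved on at a node above.

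The first observation is that $T_{i,j,k}$ can be produced in one resolution step from its two guarded axioms, so it is available as an input lemma of a regRTI proof. The refutation is therefore organized by first branching, in the pool tree, on each guard variable $x_{r(i,j,k),s(i,j,k)}$ that will be needed later, recording the unguarded clause $T_{i,j,k}$ as an input lemma available in postorder, and then replaying a version of the St{\aa}lmarck refutation on a restricted instance that uses the freshly learned unguarded transitivity clauses as axioms. Branching on all relevant guards in advance guarantees that no guard literal is resolved on twice along a single root-to-leaf path, and the fact that $T_{i,j,k}$ is derived by a two-step input derivation means it is genuinely an \emph{input} lemma.

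To make this bookkeeping precise I would introduce the notion of a \emph{bipartite partial order} hinted at in the introduction: the state at each node of the refutation tree is a pair $(P,S)$ consisting of the partial order currently being built by the St{\aa}lmarck strategy and the set $S$ of guard literals already assigned on the path from the root. The main inductive lemma, proved by induction on the size of the remaining part of the order, says that for every such $(P,S)$ there is a regular tree-like derivation, consistent with the assignments recorded in $S$, that derives exactly the clauses required by the enclosing level of the proof. The principal obstacle is verifying that the St{\aa}lmarck-style proof still goes through in the presence of the bipartite restriction, so that the clauses produced at each subproof match exactly what the next outer level expects despite the additional guard literals coming from above.

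Once this main lemma is in place, the size bound is a straightforward count: the ordinary St{\aa}lmarck refutation of $\GT_n$ has polynomial size, there are only $O(n^3)$ guard variables in total, and each extra branching on a guard variable at most doubles the size of the corresponding subtree and introduces a two-step input lemma derivation, yielding a polynomial overall bound on the size of the resulting regRTI (and hence pool) refutation of $\GGT_n$.
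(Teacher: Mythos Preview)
Your proposal has a real gap in two connected places: the regularity argument and the size bound.

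First, branching on the guard variable $x_{r,s}$ does not avoid a regularity violation. The guard for $T_{i,j,k}$ is itself one of the ordinary order variables of $\GT_n$, and the St{\aa}lmarck/Bonet--Galesi refutation resolves on essentially every such variable. So if you branch on $x_{r,s}$ near the root and then try to replay the regular $\GT_n$ refutation below, that same $x_{r,s}$ will be resolved on again inside the replayed proof, breaking regularity on that path. Your sentence ``branching on all relevant guards in advance guarantees that no guard literal is resolved on twice'' is therefore not correct: it is precisely the adversarial choice of $r,s$ that makes this collision unavoidable in the naive approach.

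Second, your size bound does not close. You say each guard branching ``at most doubles the size of the corresponding subtree,'' but there are $\Theta(n^3)$ transitivity clauses, so this would only give $2^{O(n^3)}$.

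What the paper actually does is different in a crucial way. It never branches on the guard variable when there is a conflict. Instead, when the guard of a needed $T_{i,j,k}$ would clash with the regular proof below (case~\caseitiv), it abandons that attempted subproof and branches on the \emph{literals of $T_{i,j,k}$ itself} (on $x_{i,j}$, $x_{j,k}$, $x_{k,i}$). At that node the clause $T_{i,j,k}$ is derived by one resolution on its guard and becomes a learned (input) lemma; the two or three resulting branches are then new unfinished leaves, each labelled by the clause $\falseBPOpi$ of a new bipartite partial order~$\pi$. Lemma~\ref{BpoDerivationLm} shows that for each such~$\pi$ there is a regular derivation~$P_\pi$ of $\falseBPOpi$ from clauses whose resolution variables avoid everything already set along the path, and the subproofs $S_1,\ldots,S_5$ massage the leaf clauses so that they exactly match $\falseBPOpisub{i}$ for the appropriate~$\pi_i$. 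The polynomial bound then comes from a different counting argument than yours: every invocation of case~\caseitiv{} learns at least one previously unlearned transitivity clause, so case~\caseitiv{} fires at most $2\binom{n}{3}$ times, each time adding $O(n)$ clauses and at most three new unfinished leaves; every other unfinished leaf is closed by a single $P_\pi$ of size $O(n^3)$. That gives the $O(n^6)$ bound.

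So the bipartite partial order machinery is not just bookkeeping for ``which guards have been set''; it is what lets you restart a St{\aa}lmarck-style proof from a state in which some order literals have already been fixed by the detours, and it is the learning-counting argument (one new $T_{i,j,k}$ per detour) rather than a doubling argument that yields the polynomial size.
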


The proof of Theorem~\ref{PoolResGgtThm} will construct
pool refutations in the form of regular tree-like refutations
with lemmas.
A key part of this is
learning transitive closure clauses
that are derived using resolution
on the guarded transitivity clauses of~$\GGT_n$.
A slightly modified construction, that uses a
result from~\cite{BHJ:ResTreeLearning},
gives instead tree-like regular resolution
refutations with {\em input} lemmas.  This will establish
the following:
\begin{theorem}\label{regRtiGgtThm}
The guarded graph tautology formulas $\GGT_n$ have
polynomial size, tree-like regular resolution refutations with input lemmas
(regRTI refutations).
\end{theorem}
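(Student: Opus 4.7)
The plan is to modify the regRTL refutation constructed in the proof of Theorem~\ref{PoolResGgtThm} so that each learned clause admits an \emph{input} subderivation, yielding a regRTI refutation. Recall that an input subderivation is one in which every resolution step has at least one premise among the initial axioms or the clauses learned earlier; the task therefore reduces to verifying that the lemmas introduced in Section~\ref{MainPfSect} can be restructured as input derivations without destroying regularity or blowing up the proof size.

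The lemmas in the main construction are essentially transitive closure clauses assembled from the guarded transitivity axioms $T_{i,j,k}\lor x_{r,s}$ and $T_{i,j,k}\lor \overline x_{r,s}$. I would derive them in two stages. First, I prepend to the refutation a preamble that learns each unguarded clause $T_{i,j,k}$ as a lemma via the single resolution step between its two guarded variants; this step is trivially input since both premises are axioms, and since $\{r,s\}\not\subset\{i,j,k\}$ the guard literal $x_{r,s}$ does not clash with the $T$ literals, so the step is well defined. There are only $O(n^3)$ such triples. Second, each transitive closure chain clause needed in the body of the refutation can then be assembled by iteratively resolving a growing partial chain against the appropriate previously-learned $T_{i,j,k}$; every such step has a lemma as one of its premises and hence is input. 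The result from~\cite{BHJ:ResTreeLearning} invoked in the statement provides the bookkeeping that justifies substituting these new input subderivations for the original lemma derivations inside the overall tree-like refutation.

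The main obstacle is preserving global regularity. Within each input subderivation the chain variables $x_{i_t,i_{t+1}}$ are pairwise distinct and the guard variable $x_{r,s}$ is introduced and eliminated in a single step, so local regularity is immediate. The real danger is that resolving on some $x_{r,s}$ inside the preamble might conflict with a use of the same variable higher up in the main refutation; however, each such guard resolution is confined to a small subproof at a leaf of the postorder traversal, and the original pool refutation already respected regularity in the variables it actually used, so a careful case analysis on the possible guards suffices to reconcile the two. A routine size count then shows that the overhead is $O(n^3)$ new lemma derivations plus $O(\ell)$ extra resolutions per chain lemma of length $\ell$, keeping the refutation polynomial in $n$ and completing the proof.
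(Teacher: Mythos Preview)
Your proposal has a genuine gap, and it stems from misidentifying both the obstacle and the lemmas that need attention.

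First, the preamble idea cannot work. Learning every $T_{i,j,k}$ upfront by resolving its two guarded variants on the guard literal $x_{r,s}$ is precisely the na\"ive approach the paper rules out after Theorem~\ref{GtProofsThm}: doing so destroys regularity, and indeed Alekhnovich et al.\ prove there is \emph{no} polynomial-size regular refutation of $\GGT_n$. The point is that the guard variable $x_{r(i,j,k),s(i,j,k)}$ is itself one of the ordinary ordering variables $x_{i',j'}$, and it will in general have to be resolved on along the very root-to-leaf path on which your preamble would sit. Your remark that ``a careful case analysis on the possible guards suffices to reconcile the two'' is exactly what the elaborate case split \caseiti--\caseitiv\ in the proof of Theorem~\ref{PoolResGgtThm} accomplishes; it cannot be replaced by a blanket preamble. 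In particular, case~\caseitiv\ exists precisely because sometimes the guard variable \emph{is} needed on the path below, and then $T_{i,j,k}$ must be learned by restructuring the proof rather than by a local resolution.

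Second, you are fixing the wrong lemmas. The transitivity clauses $T_{i,j,k}$ learned in cases~\caseitiii\ and~\caseitiv\ are \emph{already} input lemmas: each is obtained by a single resolution between two $\GGT_n$ axioms. The lemmas that are \emph{not} automatically input are the intermediate clauses that arise when the dag-like derivation $P^\prime_\pi$ is unfolded into a tree by a depth-first traversal; a reused dag node becomes a lemma, but it need not have an input subderivation. The paper's actual fix is to replace the depth-first traversal by the transformation of Theorem~3.3 of~\cite{BHJ:ResTreeLearning}, which turns any regular dag-like derivation into a regular tree-like derivation with input lemmas, at the cost of multiplying the size by the depth (here $O(n)$), giving an $O(n^7)$ bound. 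Your proposal does not address these $P^\prime_\pi$-internal lemmas at all.
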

A consequence of Theorem~\ref{regRtiGgtThm} is that
the $\GGT_n$ clauses can be shown unsatisfiable by
non-greedy polynomial size DPLL searches using clause learning.
This follows via
Theorem~5.6 of~\cite{BHJ:ResTreeLearning},
since the refutations of~$\GGT_n$ are regRTI, and hence regWRTI, proofs
in the sense of~\cite{BHJ:ResTreeLearning}.

However, as shown by Theorem~\ref{DPLLpolySizeThm}
in Section~\ref{GreedySect}, we can improve
the constructions of Theorems \ref{PoolResGgtThm} and~\ref{regRtiGgtThm}
to show that
the $\GGT_n$ principles can be refuted also by
{\em greedy} and {\em unit-propagating} polynomial size DPLL searches with
clause learning.

\section{Guarded graph tautology refutations}\label{MainPfSect}

The following theorem is an important ingredient of our upper bound proof.

\begin{theorem}\label{GtProofsThm}
{\rm (St{\aa}lmarck~\cite{Stalmarck:trickyformulas};
Bonet-Galesi~\cite{BonetGalesi:ResAndPolyCalculus};
Van Gelder~\cite{VanGelder:InputCoverNumber})}
The sets $\GT_n$ have regular resolution
refutations~$P_n$ of polynomial size~$O(n^3)$.
\end{theorem}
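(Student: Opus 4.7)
The plan is to inductively derive, for each pair $(k,m)$ with $0 \le k < m \le n$, the clause
$$C_k^m \;:=\; \bigvee_{j \in [m],\, j \neq k} x_{j,k},$$
expressing that $k$ has a predecessor in $[m]$.  The base case $C_k^n$ coincides with the axiom ($\alpha_\emptyset$) for $k$.  Since $C_0^1$ is the empty disjunction, deriving $C_0^1$ completes the refutation.

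The inductive step derives $C_k^m$ from $C_k^{m+1}$, $C_m^{m+1}$ and the transitivity axioms $T_{i,m,k}$ for $i \in [m]\setminus\{k\}$.  Starting from $C_m^{m+1} = \bigvee_{i \in [m]} x_{i,m}$, I would run a linear subderivation that, for each $i \in [m]\setminus\{k\}$ in turn, resolves the current clause with $T_{i,m,k} = \overline{x}_{i,m} \lor \overline{x}_{m,k} \lor \overline{x}_{k,i}$ on the variable $x_{i,m}$.  Under the paper's convention $\overline{x}_{a,b}=x_{b,a}$, the leftover literal $x_{k,m}$ of $C_m^{m+1}$ coincides with $\overline{x}_{m,k}$ and merges with the copies accumulated from the transitivity clauses, yielding $\overline{x}_{m,k} \lor C_k^m$.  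A final resolution on $x_{m,k}$ against $C_k^{m+1}$ then removes the literal $x_{m,k}$, producing $C_k^m$.

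There are $\binom{n}{2} = O(n^2)$ distinct clauses $C_k^m$, and each is derived in $O(n)$ resolution inferences; sharing each subderivation of $C_m^{m+1}$ across its multiple parents $C_k^m$ (for $k < m$) then yields a refutation DAG of total size $O(n^3)$, as required.

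The main thing to verify is regularity.  Every resolution variable used in the inference producing $C_k^m$ is of the form $x_{i,m}$ for some $i < m$ (including $x_{m,k}=x_{k,m}$ at the final step), so all such variables share the same second subscript $m$.  By induction on decreasing $m$, every resolution variable appearing anywhere in the subderivation rooted at $C_k^m$ has second subscript at least $m$.  Consequently, along any root-to-leaf path in the DAG the second subscripts of the resolution variables strictly increase, so no variable is ever resolved twice.
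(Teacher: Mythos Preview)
Your proof is correct and follows the standard St{\aa}lmarck/Bonet--Galesi construction that the paper cites; note that the paper itself does not prove this theorem, explicitly stating that it uses the refutations~$P_n$ as a black box and referring the reader to the cited works.

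One small imprecision in your regularity argument: along a root-to-leaf path the second subscripts do not \emph{strictly} increase, since the linear chain producing $C_k^m$ contains several successive resolutions all with second subscript~$m$.  What makes the refutation regular is that within that chain the first subscripts are pairwise distinct (they range over all of~$[m]$, namely each $i\in[m]\setminus\{k\}$ once and then $i=k$ at the final step), while the second subscript strictly increases only when the path passes from the block for~$C_k^m$ into that for~$C_k^{m+1}$ or~$C_m^{m+1}$.  With this clarification the argument goes through.
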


We do not include a direct proof of Theorem~\ref{GtProofsThm}
here, which can be found in
\cite{Stalmarck:trickyformulas,BonetGalesi:ResAndPolyCalculus,VanGelder:InputCoverNumber}.
The present paper uses the proofs~$P_n$ as a ``black box'';
the only property needed is that the $P_n$'s are
regular and polynomial size.
Lemma~\ref{BpoDerivationLm} below
is a direct generalization to Theorem~\ref{GtProofsThm}; in fact,
when specialized to
the case of $\pi =\emptyset$, it is identical to Theorem~\ref{GtProofsThm}.

The refutations~$P_n$ can be modified to give refutations
of $\GGT_n$ by first deriving each transitive clause~$T_{i,j,k}$
from the two guarded transitivity clauses of~$(\gamma^\prime_\emptyset)$.
This however destroys the regularity property, and
in fact no polynomial size regular refutations exist
for $\GGT_n$~\cite{AJPU:regularresolution}.

As usual, a {\em partial order} on $[n]$ is an
antisymmetric, transitive relation binary relation
on~$[n]$.  We shall be mostly interested in ``partial
specifications'' of partial orders: partial
specifications are not
required to be transitive.

\begin{definition} A {\em partial specification},~$\tau$, of a
partial order is a set of ordered pairs $\tau\subseteq[n]\times[n]$
which are consistent with some (partial) order.
The minimal partial order containing~$\tau$ is the
transitive closure of~$\tau$.  We
write $i\prec_\tau j$ to denote $\langle i,j\rangle \in \tau$,
and write $i\prec_\tau^* j$ to denote that $\langle i,j\rangle$ is in the
transitive closure of~$\tau$.

The {\em $\tau$-minimal} elements are the $i$'s such that
$j\prec_\tau i$ does not hold for any~$j$.
\end{definition}

We will be primarily interested in particular
kinds of partial orders, called ``bipartite'' partial orders,
that can be associated with partial orders.  A bipartite partial order is a
partial order that does not have any chain of inequalities
$x\prec y\prec z$.

\begin{definition}
A {\em bipartite partial order} is a binary relation~$\pi$
on $[n]$
such that the domain and range of~$\pi$
do not intersect.
The set of $\pi$-minimal elements is denoted $M_\pi$.
\end{definition}
The righthand side of Figure~\ref{BiPartiteExampleFig} shows an example.
The bipartiteness of~$\pi$ arises from the fact that $M_\pi$ and
$[n]\setminus M_\pi$ partition $[n]$ into two sets.
Note that if $i \prec_\pi j$, then $i\in M_\pi$ and $j\notin M_\pi$.
In addition, $M_\pi$~contains the isolated points of~$\pi$.
\begin{definition}
Let $\tau$ be a specification of a partial order.
The bipartite partial order~$\pi$ that is {\em associated with}
$\tau$ is defined by letting $i\prec_\pi j$ hold for precisely
those $i$ and~$j$ such that $i$ is $\tau$-minimal
and $ i \prec^*_\tau j$.
\end{definition}
It is easy to check that the $\pi$ associated with~$\tau$
is in fact a bipartite partial order.
The intuition is that $\pi$~retains only the information about
whether $i\prec^*_\tau j$ for minimal elements~$i$,
and forgets the ordering that $\tau$ imposes on
non-minimal elements. Figure~\ref{BiPartiteExampleFig} shows an example of
how to obtain a bipartite partial order from a partial specification.

\begin{figure}[t]
\begin{center}
\psset{unit=10mm,arrowscale=1.4 1.2}     
\begin{pspicture}(0,0)(4.7,2)
\pscircle*(0,0){2pt}
\uput[0](0,0){$1$}
\pscircle*(1,0){2pt}
\uput[0](1,0){$2$}
\pscircle*(2,0){2pt}
\uput[0](2,0){$3$}
\pscircle*(3,0){2pt}
\uput[0](3,0){$4$}
\pscircle*(4,0){2pt}
\uput[0](4,0){$5$}
\pscircle*(1.5,1){2pt}
\uput[0](1.5,1){$6$}
\pscircle*(2.5,1){2pt}
\uput[0](2.5,1){$7$}
\pscircle*(3.5,1){2pt}
\uput[0](3.5,1){$8$}
\pscircle*(4.5,1){2pt}
\uput[0](4.5,1){$9$}
\pscircle*(2,2){2pt}
\uput[0](2,2){$10$}
\pscircle*(4,2){2pt}
\uput[0](4,2){$11$}
\psline{->}(2,0)(1.5,1)
\psline{->}(1.5,1)(2,2)
\psline{->}(2.5,1)(2,2)
\psline{->}(3,0)(2.5,1)
\psline{->}(3,0)(3.5,1)
\psline{->}(4,0)(3.5,1)
\psline{->}(4,0)(4.5,1)
\psline{->}(4.5,1)(4,2)
\end{pspicture}
\hfill
\raise 10mm \hbox{$\Rightarrow$}
\hfill
\raise 5mm \hbox{
\begin{pspicture}(-1.5,0)(4.5,1)
\pscircle*(0,0){2pt}
\uput[0](0,0){$1$}
\pscircle*(1,0){2pt}
\uput[0](1,0){$2$}
\pscircle*(2,0){2pt}
\uput[0](2,0){$3$}
\pscircle*(3,0){2pt}
\uput[0](3,0){$4$}
\pscircle*(4,0){2pt}
\uput[0](4,0){$5$}
\pscircle*(1.5,1){2pt}
\uput[45](1.5,1){$6$}
\pscircle*(3,1){2pt}
\uput[90](3,1){$7$}
\pscircle*(3.5,1){2pt}
\uput[90](3.5,1){$8$}
\pscircle*(4,1){2pt}
\uput[90](4,1){$9$}
\pscircle*(2.5,1){2pt}
\uput[90](2.5,1){$10$}
\pscircle*(4.5,1){2pt}
\uput[90](4.5,1){$11$}
\rput(-1,1){$[n]-M_\pi$:}
\rput(-1,0){$M_\pi$:}
\psline{->}(2,0)(1.5,1)
\psline{->}(2,0)(2.5,1)
\psline{->}(3,0)(3,1)
\psline{->}(3,0)(2.5,1)
\psline{->}(3,0,1)(3,1)
\psline{->}(3,0)(3.5,1)
\psline{->}(4,0)(3.5,1)
\psline{->}(4,0)(4,1)
\psline{->}(4,0)(4.5,1)
\end{pspicture}
}
\end{center}
\caption{Example of a partial specification of a partial order (left)
and the associated bipartite partial order (right).}
\label{BiPartiteExampleFig}
\end{figure}
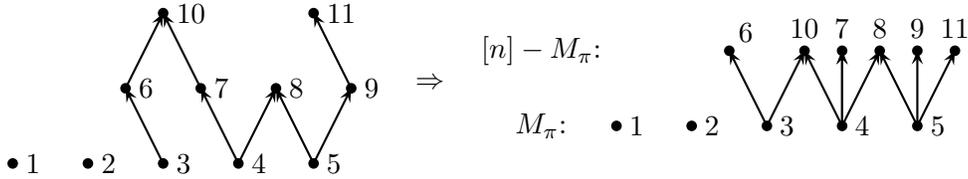

We define the graph tautology~$\GT_{\pi,n}$ relative to~$\pi$
as follows.
\begin{definition} Let $\pi$ be a bipartite partial order on~$[n]$.
Then $\GT_{\pi,n}$ is the set of clauses containing:
\begin{enumerate}
\item[($\alpha$)] The clauses $\bigvee_{j \not= i} x_{j,i}$, for each
value $i\in M_\pi$.
\item[($\beta$)] The transitivity clauses $T_{i,j,k}:=
\overline x_{i,j} \lor \overline x_{j,k} \lor \overline x_{k,i}$
for all distinct $i,j,k$ in $M_\pi$. (Vertices $i,j,k^\prime$
in Figure~\ref{BipartiteFig} show an example.)
\item[($\gamma$)] The transitivity clauses $T_{i,j,k}$
for all distinct $i,j,k$ such that $i,j\in M_\pi$ and
$i\not\prec_\pi k$ and $j\prec_\pi k$.
(As shown in
Figure~\ref{BipartiteFig}.)
\end{enumerate}
\end{definition}

The set $\GT_{\pi,n}$ is satisfiable if $\pi$ is nonempty.
As an example, there is the assignment that sets $x_{j,i}$ true
for some fixed $j\notin M_\pi$ and every $i\in M_\pi$, and sets
all other variables false.
However, if $\pi$
is applied as a restriction,
then $\GT_{\pi,n}$ becomes unsatisfiable.
That is to say, there is no assignment which
satisfies $\GT_{\pi,n}$ and is consistent with~$\pi$.
This fact
is proved by the regular derivation~$P_\pi$ described in the next
lemma.

\begin{definition} For $\pi$ a bipartite partial order, the clause
$\falseBPOpi$ is defined by
\[
\falseBPOpi ~:=~ \{ \overline x_{i,j} : i\prec_\pi j \},
\]
\end{definition}

\begin{lemma}\label{BpoDerivationLm}
Let $\pi$ be a bipartite partial order on~$[n]$.
Then there
is a regular derivation~$P_\pi$ of $\falseBPOpi$
from the set $\GT_{\pi,n}$.

The only variables resolved on in~$P_\pi$ are the following:
the variables $x_{i,j}$ such that $i,j\in M_\pi$,
and the variables $x_{i,k}$ such that $k\notin M_\pi$,
$i\in M_\pi$, and
$i\not\prec_\pi k$.
\end{lemma}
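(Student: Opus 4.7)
My plan is to construct $P_\pi$ by splicing together two regular sub-derivations. First, I will perform a short \emph{cleanup}: for each $i\in M_\pi$, starting from the $(\alpha)$-axiom $A_i:=\bigvee_{j\neq i}x_{j,i}$, I will resolve away every ``bad'' literal $x_{j,i}$ with $j\notin M_\pi$ and $i\not\prec_\pi j$ to obtain a cleaned axiom $A'_i$. Second, I will invoke the regular refutation $P_{|M_\pi|}$ of $\GT_{|M_\pi|}$ given by Theorem~\ref{GtProofsThm} on the element set $M_\pi$, substituting each cleaned $A'_i$ for the corresponding $\alpha$-leaf and using the $(\beta)$-clauses of $\GT_{\pi,n}$ for the transitivity leaves. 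No induction is needed: $P_{|M_\pi|}$ is used entirely as a black box.

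For the cleanup, the literals of $A_i$ split into three groups according to $j$: those with $j\in M_\pi\setminus\{i\}$ (kept as the ``desired'' literals of the $\alpha_\emptyset$-axiom of $\GT_{|M_\pi|}$), those with $j\notin M_\pi$ and $i\prec_\pi j$ (kept, since they equal $\overline{x}_{i,j}\in\falseBPOpi$), and the bad ones with $j\notin M_\pi$ and $i\not\prec_\pi j$. For each bad $j$, a witness $i'\in M_\pi$ with $i'\prec_\pi j$ exists because $j$ lies in the range of $\pi$, and $i'\neq i$ because $i\not\prec_\pi j$. The $(\gamma)$-clause $T_{i,i',j}$ supplies the complementary literal $\overline{x}_{j,i}$, so a single resolution on $x_{i,j}$ removes $x_{j,i}$ and introduces the two literals $\overline{x}_{i,i'}$ and $\overline{x}_{i',j}$. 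The heart of the bookkeeping is that $\overline{x}_{i,i'}=x_{i',i}$ already occurs in the current clause (it is one of the ``desired'' literals, since $i'\in M_\pi\setminus\{i\}$) and is therefore absorbed set-theoretically, while $\overline{x}_{i',j}\in\falseBPOpi$. Iterating through all bad $j$ yields $A'_i=\bigvee_{j\in M_\pi\setminus\{i\}}x_{j,i}\;\cup\;B_i$ with $B_i\subseteq\falseBPOpi$, and the only resolution variables used are precisely the case-(b) variables $x_{i,k}$ with $i\in M_\pi$, $k\notin M_\pi$, $i\not\prec_\pi k$.

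After the cleanups are spliced in as leaf-derivations of $P_{|M_\pi|}$, the refutation $P_{|M_\pi|}$ resolves only on the case-(a) variables $x_{i,j}$ with $i,j\in M_\pi$; since every literal in any $B_i$ is indexed by some $j\notin M_\pi$, the baggage is untouched by $P_{|M_\pi|}$ and accumulates unchanged into the root as $\bigcup_i B_i$. Because $\GT_{|M_\pi|}\setminus\{A_i\}$ is readily satisfiable (make $i$ the smallest element), any refutation must use every $\alpha$-axiom, so in particular every $A'_i$ contributes its padding---exactly the literals $\overline{x}_{i,j}$ with $j\notin M_\pi$ and $i\prec_\pi j$---and the union of these paddings is already all of $\falseBPOpi$. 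Thus the derived clause is exactly $\falseBPOpi$. The main obstacle I anticipate is verifying regularity across the splice, but it resolves cleanly: the cleanup uses only case-(b) variables and $P_{|M_\pi|}$ only case-(a) variables, these two classes are disjoint, and within each piece regularity is immediate (the cleanup resolves each bad $j$ on a distinct variable $x_{i,j}$, and $P_{|M_\pi|}$ is regular by assumption), so no variable is resolved on twice along any root-to-leaf path and the variables used are exactly those listed in the lemma.
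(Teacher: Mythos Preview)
Your proof is correct and follows essentially the same approach as the paper: both perform a cleanup on each $(\alpha)$-axiom by resolving against $(\gamma)$-clauses to eliminate the literals $x_{j,i}$ with $j\notin M_\pi$ and $i\not\prec_\pi j$, and then feed the cleaned axioms (plus the $(\beta)$-clauses) into the black-box regular refutation of $\GT_{|M_\pi|}$ from Theorem~\ref{GtProofsThm}, carrying the $\falseBPOpi$ literals through as untouched baggage. Your justification that the derived clause equals (and is not a proper subset of) $\falseBPOpi$---via the satisfiability of $\GT_{|M_\pi|}\setminus\{\alpha_i\}$, forcing every cleaned axiom and hence every padding literal to reach the root---is in fact more explicit than the paper's corresponding sentence.
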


Lemma~\ref{BpoDerivationLm} implies
that if $\pi$ is the bipartite partial order
associated with a partial specification~$\tau$ of
a partial order, then the derivation~$P_\pi$ does
not resolve on any literal whose value is set
by~$\tau$.  This is proved by noting that if $i\prec_\tau j$,
then $j \notin M_\pi$.

Note that if $\pi$ is empty,
$M_\pi=[n]$ and there are no clauses
of type~($\gamma$).
In this case, $\GT_{\pi,n}$ is identical to~$\GT_n$,
and $P_\pi$ is the same as the refutation of~$\GT_n$
of Theorem~\ref{GtProofsThm}.

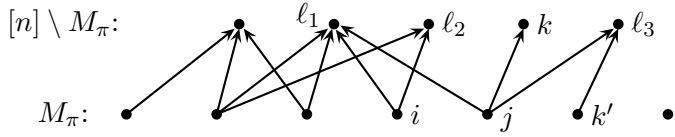
\begin{figure}[t]
\begin{center}
\psset{unit=0.06cm}     
\begin{pspicture}(-30,0)(140,20)
\pscircle*(0,0){2pt}
\pscircle*(20,0){2pt}
\pscircle*(40,0){2pt}
\pscircle*(60,0){2pt}
\pscircle*(80,0){2pt}
\pscircle*(100,0){2pt}
\pscircle*(120,0){2pt}
\rput(-14,0){$M_\pi$:}
\rput(-14,20){$[n]\setminus M_\pi$:}
\pscircle*(25,20){2pt}
\pscircle*(46,20){2pt}
\pscircle*(67,20){2pt}
\pscircle*(88,20){2pt}
\pscircle*(109,20){2pt}
\psline[arrowscale=1.4 1.2]{->}(0,0)(24,19)
\psline[arrowscale=1.4 1.2]{->}(20,0)(25,19)
\psline[arrowscale=1.4 1.2]{->}(40,0)(26,19)
\psline[arrowscale=1.4 1.2]{->}(20,0)(44.7,19)
\psline[arrowscale=1.4 1.2]{->}(40,0)(45.6,19)
\psline[arrowscale=1.4 1.2]{->}(60,0)(46.4,18.8)
\psline[arrowscale=1.4 1.2]{->}(80,0)(47.4,19.7)
\psline[arrowscale=1.4 1.2]{->}(20,0)(66.0,19)
\psline[arrowscale=1.4 1.2]{->}(60,0)(67,19)
\psline[arrowscale=1.4 1.2]{->}(80,0)(88,19)
\psline[arrowscale=1.4 1.2]{->}(80,0)(108,19)
\psline[arrowscale=1.4 1.2]{->}(100,0)(109,19)
\uput[0](60,0){$i$}
\uput[-10](80,0){$j$}
\uput[0](88,20){$k$}
\uput[0](100,0){$k^\prime$}
\uput[160](46,20){$\ell_1$}
\uput[0](67,20){$\ell_2$}
\uput[0](109,20){$\ell_3$}
\end{pspicture}
\end{center}
\caption{A bipartite partial order~$\pi$ is pictured, with the
ordered pairs of~$\pi$ shown as directed
edges.  (For instance, $j \prec_\pi k$ holds.)
The set $M_\pi$ is the set of minimal vertices.
The nodes $i,j,k$ shown are an example of nodes used
for a transitivity axiom
$\overline x_{i,j} \lor \overline x_{j,k} \lor \overline x_{k,i}$
of type~($\gamma$).  The nodes
$i,j,k^\prime$ are an example of the nodes
for a transitivity axiom of
type~($\beta$).}
\label{BipartiteFig}
\end{figure}

\begin{proof}
By renumbering the vertices, we can assume w.l.o.g.\ that
$M_\pi = \{0,\ldots,m{-}1\}$.
For each $k\ge m$, there is
at least one value of~$j$ such that $j \prec_\pi k$:
let $J_k$ be an arbitrary such value~$j$.  Note $J_k<m$.

Fix $i\in M_\pi$; that is, $i<m$.
Recall that the clause of type~($\alpha$) in $\GT_{\pi,n}$ for~$i$ is
$\bigvee_{j \not= i} x_{j,i}$.  We resolve this
clause successively, for each $k\ge m$ such that $i\not\prec_\pi k$,
against the clauses~$T_{i,J_k,k}$ of type~($\gamma$)
\[
     \overline x_{i,J_k}
\lor
     \overline x_{J_k,k}
\lor
     \overline x_{k,i}
\]
using resolution variables $x_{k,i}$.
(Note that $J_k\not= i$ since $i\not\prec_\pi k$.)
This yields a clause~$T^\prime_{i,m}$:
\[
\bigvee_{k\ge m \atop i \not\prec_\pi k} \overline x_{i,J_k}
~\lor~
\bigvee_{k\ge m \atop i \not\prec_\pi k} {\overline x_{J_k, k}}
~\lor~
\bigvee_{k\ge m \atop i \prec_\pi k}{x_{k,i}}
~\lor~
\bigvee_{k<m \atop k\not = i} x_{k,i}.
\]
The first two disjuncts shown above for~$T^\prime_{i,m}$
come from the side literals of the clauses~$T_{i,J_k,k}$;
the last two disjuncts come
from the literals in $\bigvee_{j \not= i} x_{j,i}$
which were not resolved on.
Since a literal $\overline x_{i,J_k}$ is the
same literal as $x_{J_k,i}$ and since $J_k<m$,
the literals in the first disjunct are
also contained in the fourth disjunct.
Thus, eliminating duplicate literals, $T^\prime_{i,m}$ is
equal to the clause
\[
\bigvee_{k\ge m \atop i \not\prec_\pi k} {\overline x_{J_k, k}}
~\lor~
\bigvee_{k\ge m \atop i \prec_\pi k}{x_{k,i}}
~\lor~
\bigvee_{k<m \atop k\not = i} x_{k,i}.
\]

Repeating this process, we obtain
derivations of the clauses~$T^\prime_{i,m}$ for all $i<m$.
The final disjuncts of these clauses,
$\bigvee_{i\not= k<m} x_{k,i}$,
are the same as the ($\alpha_\emptyset$) clauses in $\GT_m$.
Thus, the clauses~$T^\prime_{i,m}$ give
all ($\alpha_\emptyset$) clauses of~$\GT_m$, but with literals
$\overline x_{J_k, k}$ and~$x_{k,i}$ added in as side literals.
Moreover, the
clauses of type~($\beta$) in $\GT_{\pi,n}$
are exactly the transitivity clauses
of $\GT_m$.   All these clauses can be combined
exactly as in the refutation of~$\GT_m$
described in Theorem~\ref{GtProofsThm}, but carrying along
extra side literals $\overline x_{J_k,k}$ and~$x_{k,i}$,
or equivalently carrying along
literals~$\overline x_{J_k,k}$ for $J_k \prec_\pi k$,
and~$\overline x_{i,k}$ for $i \prec_\pi k$.
Since the refutation of~$\GT_m$ uses all of its transitivity
clauses and since each $\overline x_{J_k,k}$ literal is also
one of the $\overline x_{i,k}$'s,
this yields a resolution derivation~$P_\pi$
of the clause
\[
\{  \overline x_{i,k} : \hbox{$i\prec_\pi k$} \}.
\]
This is the clause $\falseBPOpi$ as desired.

Finally, we observe that $P_\pi$ is regular.  To show this,
note that the first parts of $P_\pi$ deriving the clauses~$T^\prime_{i,m}$
are regular by construction,
and they use resolution only on variables~$x_{k,i}$
with $k\ge m$, $i<m$, and $i\not\prec_\pi k$.
The remaining part of~$P_\pi$ is also regular by Theorem~\ref{GtProofsThm},
and uses resolution only on variables $x_{i,j}$ with $i,j\le m$.
\hfill $\qed$
\end{proof}

\begin{proof} of Theorem~\ref{PoolResGgtThm}.
We will show how to construct a series of ``LR partial refutations'',
denoted $R_0$, $R_1$, $R_2, \ldots$; this process
eventually terminates
with a pool (regRTL) resolution refutation of~$\GGT_n$.
The terminology ``LR partial'' indicates that
the refutation is being constructed in left-to-right order, with
the left part of the refutation properly formed, but with
many of the remaining leaves being labeled with bipartite partial orders
instead of with valid learned clauses or initial clauses from~$\GGT_n$.
We first describe the construction of the pool refutation, and
leave the size analysis to the end.

An LR partial refutation~$R$ is a tree with nodes
labeled with clauses that form a correct pool resolution proof,
except possibly at the leaves.
Furthermore, it must satisfy the following conditions.

\begin{description}
\item[\rm a.]
$R$~is a tree. The root is labeled with the
empty clause. Each non-leaf node in~$R$ has a left child and right
child; the clause labeling the node is derived by resolution from the clauses on its two children.
\item[\rm b.] For each clause~$C$ occurring in~$R$, the
clause $C^+$ and the set of ordered
pairs $\tau(C)$ are defined by
\begin{eqnarray*}
C^+ & := & \{\overline x_{i,j} : \hbox{$\overline x_{i,j}$
occurs in some clause on the branch} \\
& & \quad \quad \quad \quad \quad
\hbox{from the root node of~$R$ up to and including~$C$}\},
\end{eqnarray*}
and $\tau(C) = \{ \langle i,j \rangle : \overline x_{i,j}\in C^+\}$.
Note that $C\subseteq C^+$ holds by definition.
In many cases,
$\tau(C)$ will be a partial specification of a partial order,
but this is not always true.  For instance, if $C$ is
a transitivity axiom, then $\tau(C)$ has a 3-cycle
and is not consistent
as a specification of a partial order.

\item[\rm d.] Leaves are either ``finished'' or ``unfinished''.
Each finished leaf~$L$ is
labeled with either a clause
from $\GGT_n$ or a clause that occurs to the left of~$L$
in the postorder traversal of~$R$.
\item[\rm e.] For an unfinished leaf labeled with clause~$C$,
the set $\tau(C)$ is
a partial specification of a partial order.
Furthermore, letting
$\pi$ be the bipartite
partial order associated with $\tau(C)$,
the clause $C$ is equal to~$\falseBPOpi$.
\end{description}

Property e.\ is particularly crucial and is novel to
our construction.  As shown below, each
unfinished leaf, labeled with a clause $C = \falseBPOpi$,
will be replaced by a derivation~$S$.  The derivation~$S$ often will
be based on~$P_\pi$, and thus might be expected to end with
exactly the clause~$C$; however, some of the resolution inferences
needed for $P_\pi$ might be disallowed by the regularity property of pool
resolution proofs.
This can mean that $S$ will instead be
a derivation of a clause~$C^\prime$ such that
$C\subseteq C^\prime \subseteq C^+$.
The condition $C^\prime\subseteq C^+$ is
required because any literal $x \in C^\prime \setminus C$
will be handled by modifying the refutation~$R$
by propagating $x$ downward in~$R$
until reaching a clause that already contains~$x$.
The condition $C^\prime\subseteq C^+$ ensures that such
a clause exists.
The fact that $C^\prime\supseteq C$
will mean that enough literals are present
for the derivation to
use only (non-degenerate) resolution
inferences --- by virtue of the fact that our
constructions will pick~$C$ so that it contains
the literals that must be present for use as
resolution literals.  The extra literals
in $C^\prime \setminus C$ will be handled by propagating them down
the proof to where they are resolved on.

The construction begins by letting $R_0$ be the ``empty'' refutation,
containing just the empty clause.  Of course, this clause is
an unfinished leaf, and $\tau(\emptyset) = \emptyset$.  Thus
$R_0$ is a valid LR partial refutation.

For the induction step, $R_i$ has been constructed already.
Let $C$ be the leftmost unfinished clause in~$R_i$.
$R_{i+1}$ will be formed by replacing~$C$
by a refutation~$S$ of some clause
$C^\prime$ such that $C\subseteq C^\prime \subseteq C^+$.

We need to describe the (LR partial) refutation~$S$.  Let $\pi$ be the
bipartite partial order associated with $\tau(C)$, and consider
the derivation~$P_\pi$ from Lemma~\ref{BpoDerivationLm}.
Since $C$ is $\falseBPOpi$ by condition~e., the final line of~$P_\pi$
is the clause~$C$.  The intuition is that we would like
to let $S$ be~$P_\pi$.  The first difficulty with this is that
$P_\pi$ is dag-like, and the $LR$-refutation is intended to be
tree-like,  This difficulty, however, can be circumvented by just
expanding $P_\pi$, which is regular,
into a tree-like regular derivation with lemmas by the simple expedient of
using a depth-first traversal of~$P_\pi$.
The second, and more serious, difficulty is that
$P_\pi$ is
a derivation from~$\GT_n$, not~$\GGT_n$.  Namely,
the derivation~$P_\pi$
uses the transitivity clauses of~$\GT_n$ as initial clauses
instead of the guarded
transitivity clauses of~$\GGT_n$.
The transitivity clauses $T_{i,j,k} :=
\overline x_{i,j}\lor \overline x_{j,k} \lor \overline x_{k,i}$
in~$P_\pi$ are handled one at a time as described below.
We will use four separate constructions:
in case~\caseiti{}, no change to~$P_\pi$ is required;
cases \caseitii{} and~\caseitiii{} require small changes;
and in the fourth case, the subproof~$P_\pi$
is abandoned in favor of ``learning'' the transitivity clause.

By the remark made after
Lemma~\ref{BpoDerivationLm},
no literal in~$C^+$ is used as a resolution
literal in~$P_\pi$.

\begin{description}
\item[\caseiti] If a transitivity clause~$T_{i,j,k}$
of~$P_\pi$ already appears earlier in~$R_i$ (that is, to the left
of~$C$), then it is already {\em learned}, and can be used freely
in~$P_\pi$.
\end{description}
In the remaining cases \caseitii{}-\caseitiv{},
the transitivity clause~$T_{i,j,k}$ is not yet learned.
Let the guard variable for~$T_{i,j,k}$
be $x_{r,s}$, so $r=r(i,j,k)$ and $s=s(i,j,k)$.
\begin{description}
\item[\caseitii] Suppose case~\caseiti{} does not
apply and
that the guard variable~$x_{r,s}$ or its negation $\overline x_{r,s}$
is a member of~$C^+$.
The guard variable thus is used as a resolution variable somewhere
along the branch from the root to clause~$C$.
Then, as mentioned above,
Lemma~\ref{BpoDerivationLm} implies that $x_{r,s}$ is not resolved on
in~$P_\pi$.
Therefore,
we can add the literal $x_{r,s}$ or~$\overline x_{r,s}$ (respectively)
to the clause $T_{i,j,k}$ and to every clause on any path below~$T_{i,j,k}$
until reaching a clause that already contains that literal.
This replaces $T_{i,j,k}$ with one of the initial clauses
$T_{i,j,k}\lor x_{r,s}$ or
$T_{i,j,k}\lor \overline x_{r,s}$ of~$\GGT_n$.
By construction, it preserves the validity of the resolution
inferences of~$R_i$ as well as the regularity
property.
Note this adds the literal
$x_{r,s}$ or $\overline x_{r,s}$ to the final
clause~$C^\prime$ of the modified~$P_\pi$.
This maintains the property that
$C\subseteq C^\prime \subseteq C^+$.
\item[\caseitiii] Suppose case~\caseiti{} does not apply and that
$x_{r,s}$ is not used as a
resolution variable anywhere below~$T_{i,j,k}$ in~$P_\pi$ and
is not a member of $C^+$.  In this case, $P_\pi$
is modified so as to derive the clause~$T_{i,j,k}$
from the two $\GGT_n$ clauses
$T_{i,j,k}\lor x_{r,s}$ and
$T_{i,j,k}\lor \overline x_{r,s}$ by resolving on~$x_{r,s}$.
This maintains the regularity of the derivation.  It also means that
henceforth $T_{i,j,k}$ will be learned.
\end{description}
If all of the transitivity clauses in~$P_\pi$ can be handled by
cases \caseiti{}-\caseitiii{},
then we use $P_\pi$ to define~$R_{i+1}$.  Namely,
let $P_\pi^\prime$ be the derivation~$P_\pi$
as modified by the applications of cases \caseitii{}
and~\caseitiii{}.
The derivation~$P_\pi^\prime$ is regular and dag-like, so we
can recast it as a tree-like derivation~$S$ with lemmas, by using
a depth-first traversal of~$P^\prime_\pi$.
The size of~$S$ is linear in the size of~$P^\prime_\pi$, since only
input lemmas need to be repeated.  The final line of~$S$
is the clause~$C^\prime$, namely $C$ plus the literals
introduced by case~\caseitii{}.
The derivation $R_{i+1}$ is
formed from~$R_i$ by replacing the clause~$C$ with the derivation~$S$
of~$C^\prime$, and then propagating each new literal $x\in C^\prime\setminus C$
down towards the root of~$R_i$, adding~$x$ to each clause below~$S$
until reaching a clause that already contains~$x$.
The derivation~$S$ contains no unfinished leaf, so $R_{i+1}$ contains
one fewer unfinished leaves than~$R_i$.

On the other hand, if even one transitivity axiom~$T_{i,j,k}$ in~$P_\pi$ is
not covered by the above three cases, then case~\caseitiv{} must be
used instead.
This introduces
a completely different construction to form~$S$:
\begin{description}
\item[\caseitiv{}] Let $T_{i,j,k}$ be any transitivity axiom in~$P_\pi$
that is not covered by cases \caseiti{}-\caseitiii{}.
In this case, the guard variable~$x_{r,s}$ is used as a resolution
variable in~$P_\pi$ somewhere below~$T_{i,j,k}$; in general,
this means we cannot use resolution on~$x_{r,s}$ to derive $T_{i,j,k}$
while maintaining the desired pool property.  Hence, $P_\pi$~is no longer used,
and
we instead will form $S$ with a short left-branching path
that
``learns'' $T_{i,j,k}$.
This will generate two or three new unfinished leaf nodes.  Since unfinished
leaf nodes in a LR partial derivation must be labeled with clauses
from bipartite partial orders, it is also necessary to attach short
derivations to these
unfinished leaf nodes to make the unfinished leaf clauses
of~$S$ correspond correctly to
bipartite partial orders.
These unfinished leaf nodes are then kept in~$R_{i+1}$
to be handled at later stages.

There are separate constructions depending on whether
$T_{i,j,k}$ is a clause of type ($\beta$) or~($\gamma$);
details are given below.
\end{description}

First suppose $T_{i,j,k}$ is of type~($\gamma$), and thus
$\overline x_{j,k}$ appears in~$C$.
(Refer to Figure~\ref{BipartiteFig}.)
Let $x_{r,s}$ be the
guard variable for the transitivity axiom~$T_{i,j,k}$.
The derivation~$S$ will have the form
\label{gammaCaseEq}
\begin{prooftree}
\AxiomC{$\overline x_{i,j},\overline x_{j,k}, \overline x_{k,i}, x_{r,s}$}
\AxiomC{$\overline x_{i,j},\overline x_{j,k}, \overline x_{k,i}, \overline x_{r,s}$}
\BinaryInfC{$\overline x_{i,j},\overline x_{j,k}, \overline x_{k,i}$}
\AxiomC{\raisebox{7pt}{$S_1$}$\proofdots$}
\kernHyps{-1ex}
\noLine
\UnaryInfC{$\overline x_{i,j},\overline x_{i,k},\overline \pi_{-[jk;jR(i)]}$}
\BinaryInfC{$\overline x_{i,j},\overline x_{j,k},\overline \pi_{-[jk;jR(i)]}$}
\AxiomC{\raisebox{7pt}{$S_2$}$\proofdots$}
\kernHyps{-1ex}
\noLine
\UnaryInfC{$\overline x_{j,i},\overline x_{j,k},\overline \pi_{-[jk;iR(j)]}$}
\BinaryInfC{$\overline x_{j,k}, \overline \pi_{-[jk]}$}
\end{prooftree}
The notation $\overline \pi_{-[jk]}$ denotes the
disjunction of the negations of the literals in
$\pi$ omitting the literal~$\overline x_{j,k}$.
We write ``$iR(j)$'' to indicate literals $x_{i,\ell}$
such that $j\prec_\pi \ell$.  (The ``$R(j)$''
means ``range of~$j$''.)  Thus $\overline \pi_{-[jk;iR(j)]}$
denotes the clause containing the negations of the literals in~$\pi$,
omitting $\overline x_{j,k}$ and any literals $\overline x_{i,\ell}$ such
that $j\prec_\pi \ell$.  The clause $\overline \pi_{-[jk;jR(i)]}$
is defined similarly.

The upper leftmost inference of~$S$ is a resolution inference on
the variable~$x_{r,s}$.  Since $T_{i,j,k}$ is not covered by
either case \caseiti{} or~\caseitii{}, the variable~$x_{r,s}$
is not in~$C^+$.  Thus,
this use of~$x_{r,s}$ as a resolution variable does not violate
regularity.  Furthermore, since $T_{i,j,k}$ is of type~($\gamma$),
we have
$i{\not\prec_{\tau(C)}} j$,
$j{\not\prec_{\tau(C)}} i$,
$i{\not\prec_{\tau(C)}} k$, and
$k{\not\prec_{\tau(C)}} i$.
Thus
the literals $x_{i,j}$ and~$x_{i,k}$ are not in~$C^+$, so
they also can be resolved on without violating regularity.

Let $C_1$ and~$C_2$ be the final clauses of $S_1$ and~$S_2$, and
let $C_1^-$ be the clause below~$C_1$ and above~$C$.  The set
$\tau(C_2)$
is obtained by adding $\langle j,i \rangle$ to $\tau(C)$,
and similarly $\tau(C_1^-)$ is $\tau(C)$ plus~$\langle i,j \rangle$.
Since $T_{i,j,k}$ is type~($\gamma$), we have $i,j\in M_\pi$.
Therefore,
since $\tau(C)$ is a partial specification of a partial order,
$\tau(C_2)$ and $\tau(C_1^-)$ are also both partial specifications
of partial orders.
Let $\pi_2$ and~$\pi_1$ be the bipartite orders
associated with these two partial specifications (respectively).
We will form the subproof~$S_1$
so that it contains the clause $\falseBPOpisub 1$ as its only unfinished clause.
This will require adding inferences in~$S_1$ which
add and remove the appropriate literals.  The first step of this type
already occurs in going up from $C_1^-$ to~$C_1$ since this
has removed $\overline x_{j,k}$ and added $\overline x_{i,k}$,
reflecting the fact that $j$ is not $\pi_1$-minimal
and thus $x_{i,k}\in\pi_1$ but $x_{j,k}\notin \pi_1$.
Similarly, we will form~$S_2$ so that its only unfinished clause
is $\falseBPOpisub 2$.

We first describe the
subproof~$S_2$ of~$S$.  The situation is pictured in Figure~\ref{S2Fig},
which shows an extract from Figure~\ref{BipartiteFig}: the edges
shown in part~(a) of the figure correspond to the literals present in
the final line~$C_2$ of~$S_2$.  In particular, recall that the
literals $\overline x_{i,\ell}$ such that $j\prec_\pi \ell$ are
omitted from the last line of~$S_2$.  (Correspondingly, the edge
from $i$ to~$\ell_1$ is omitted from Figure~\ref{S2Fig}.)   The last
line~$C_2$ of~$S_2$ may not correspond to a bipartite partial order
as it may not partition $[n]$ into minimal and non-minimal elements;
thus, the last line of~$S_2$
may not qualify to be an unfinished node of~$R_{i+1}$.
(An example of this in Figure~\ref{S2Fig}(a) is
that $j\prec_{\tau(C_2)} i \prec_{\tau(C_2)}\ell_2$, corresponding to
$\overline x_{j,i}$ and~$\overline x_{i,\ell_2}$
being in the last line of~$S_2$.)
The bipartite partial
order~$\pi_2$ associated with~$\tau(C_2)$ is equal to
the bipartite partial order that agrees with~$\pi$ except that each
$i\prec_\pi \ell$ condition is replaced with
the condition $j\prec_{\pi_2} \ell$.  (This is represented in
Figure~\ref{S2Fig}(b) by the fact that
the edge from $i$ to~$\ell_2$ has been replaced by the edge
from $j$ to~$\ell_2$. Note that the vertex~$i$ is no longer a
minimal element of~$\pi_2$; that is, $i\notin M_{\pi_2}$.)
We wish to form~$S_2$ to be a regular derivation of
the clause $\overline x_{j,i},\overline\pi_{-[jk;iR(j)]}$ from
the clause $\falseBPOpisub 2$.

The subproof of~$S_2$ for replacing $\overline x_{i,\ell_2}$ in~$\overline \pi$
with $\overline x_{j,\ell_2}$ in~$\overline \pi_2$ is as follows, letting
$\overline \pi^*$ be $\overline\pi_{-[jk;iR(j);i\ell_2]}$.
\begin{equation}\label{S2FormEq}
\AxiomC{\raisebox{7pt}{$S^\prime_2$}$\proofdots$}
\kernHyps{-1ex}
\noLine
\UnaryInfC{$\overline x_{j,i},\overline x_{i,\ell_2},\overline x_{\ell_2,j}$}
\AxiomC{$\proofdots$ \raisebox{1ex}{\hbox to 0pt{rest of~$S_2$}}}
\noLine
\UnaryInfC{$\overline x_{j,k},\overline x_{j,\ell_2}, \overline x_{j,i}, \overline \pi^*$}
\BinaryInfC{$\overline x_{j,k},\overline x_{i,\ell_2},\overline x_{j,i},\overline \pi^*$}
\DisplayProof
\end{equation}
The part labeled ``rest of $S_2$'' will handle similarly the
other literals~$\ell$ such that
$i\prec_\pi \ell$ and $j\not\prec_\pi \ell$.
The final line of $S_2^\prime$ is the transitivity axiom~$T_{j,i,\ell_2}$.
This is a $\GT_n$ axiom, not a $\GGT_n$ axiom; however, it can be
handled by the methods of cases \caseiti{}-\caseitiii{}.
Namely, if $T_{j,i,\ell_2}$ has already been learned by appearing somewhere
to the left in~$R_i$, then $S^\prime_2$~is just this single clause.
Otherwise, let the guard variable for $T_{j,i,\ell_2}$ be $x_{r',s'}$.
If $x_{r',s'}$ is used as a resolution variable below~$T_{j,i,\ell_2}$, then
replace $T_{j,i,\ell_2}$ with
$T_{j,i,\ell_2}\lor x_{r',s'}$ or $T_{j,i\ell_2}\lor \overline x_{r',s'}$,
and propagate the $x_{r',s'}$ or $\overline x_{r',s'}$ to clauses down
the branch leading to~$T_{j,i,\ell_2}$ until reaching a clause that
already contains that literal.
Finally, if $x_{r',s'}$ has not been
used as a resolution variable in~$R_i$ below~$C$, then let
$S^\prime_2$~consist of a resolution inference deriving (and learning)
$T_{j,i,\ell_2}$ from
the clauses
$T_{j,i,\ell_2},x_{r',s'}$ and
$T_{j,i,\ell_2},\overline x_{r',s'}$.

To complete the construction of~$S_2$,
the inference (\ref{S2FormEq}) is repeated for each value of~$\ell$
such that $i\prec_\pi\ell$ and $j\not\prec_\pi\ell$.
The result is that $S_2$ has one unfinished leaf clause, and it
is labeled with the clause $\falseBPOpisub 2$.

\begin{figure}[t]
\psset{unit=0.06cm}     
\hfill
\begin{pspicture}(40,-10)(110,20)
\pscircle*(60,0){2pt}
\pscircle*(80,0){2pt}
\pscircle*(46,20){2pt}
\pscircle*(67,20){2pt}
\pscircle*(88,20){2pt}
\pscircle*(109,20){2pt}
\psline[arrowscale=1.4 1.2]{->}(80,0)(47.4,19.7)
\psline[arrowscale=1.4 1.2]{->}(60,0)(67,19)
\psline[arrowscale=1.4 1.2]{->}(80,0)(88,19)
\psline[arrowscale=1.4 1.2]{->}(80,0)(108,19)
\psline[arrowscale=1.4 1.2]{->}(80,0)(61,0)
\uput[180](60,0){$i$}
\uput[-10](80,0){$j$}
\uput[0](88,20){$k$}
\uput[160](46,20){$\ell_1$}
\uput[0](67,20){$\ell_2$}
\uput[0](109,20){$\ell_3$}
\rput(70,-10){(a) $\overline x_{j,k},\overline x_{i,\ell_2},\overline x_{j,i},\overline \pi^*$}
\end{pspicture}
\hfill
\begin{pspicture}(40,-10)(110,20)
\pscircle*(60,0){2pt}
\pscircle*(80,0){2pt}
\pscircle*(46,20){2pt}
\pscircle*(67,20){2pt}
\pscircle*(88,20){2pt}
\pscircle*(109,20){2pt}
\psline[arrowscale=1.4 1.2]{->}(80,0)(47.4,19.7)
\psline[arrowscale=1.4 1.2]{->}(80,0)(67,19)
\psline[arrowscale=1.4 1.2]{->}(80,0)(88,19)
\psline[arrowscale=1.4 1.2]{->}(80,0)(108,19)
\psline[arrowscale=1.4 1.2]{->}(80,0)(61,0)
\uput[180](60,0){$i$}
\uput[-10](80,0){$j$}
\uput[0](88,20){$k$}
\uput[160](46,20){$\ell_1$}
\uput[0](67,20){$\ell_2$}
\uput[0](109,20){$\ell_3$}
\rput(70,-10){(b) $\overline x_{j,k},\overline x_{i,\ell_2},\overline x_{j,i},\overline \pi^*$}
\end{pspicture}
\hfill
\caption{The partial orders for the fragment of~$S_2$ shown
in~(\ref{S2FormEq}).}
\label{S2Fig}
\end{figure}

We next describe the subproof~$S_1$ of~$S$.
The situation is shown in Figure~\ref{S1Fig}.  As in the formation
of~$S_2$, the final clause~$C_1$ in~$S_1$ may need to be modified in order
to correspond to the bipartite partial order~$\pi_1$ which
is associated with~$\tau(C_1)$.  First,
note that the literal $\overline x_{j,k}$ is already
replaced by~$\overline x_{i,k}$ in the final clause of~$S_1$.
The other change that is needed is that, for every $\ell$ such that
$j\prec_\pi \ell$ and $i\not\prec_\pi \ell$, we must
replace $\overline x_{j,\ell}$ with $\overline x_{i,\ell}$
since we have
$j\not\prec_{\pi_1} \ell$ and $i\prec_{\pi_1} \ell$.  Vertex~$\ell_3$
in Figure~\ref{S1Fig} is an example of a such a value~$\ell$.  The
ordering in the final clause of~$S_1$ is shown in part~(a), and the
desired ordered pairs of~$\pi_1$ are shown in part~(b).
Note that $j$ is no longer a minimal element in~$\pi_1$.

The replacement of $\overline x_{j,\ell_3}$
with $\overline x_{i,\ell_3}$
is effected by the following inference, letting
$\overline \pi^*$ now be
$\overline \pi_{-[jk;jR(i);j\ell_3]}$.
\begin{equation}\label{S1FormEq}
\AxiomC{\raisebox{7pt}{$S^\prime_1$}$\proofdots$}
\kernHyps{-1ex}
\noLine
\UnaryInfC{$\overline x_{i,j},\overline x_{j,\ell_3},\overline x_{\ell_3,i}$}
\AxiomC{$\proofdots$ \raisebox{1ex}{\hbox to 0pt{rest of~$S_1$}}}
\noLine
\UnaryInfC{$\overline x_{i,k},\overline x_{i,\ell_3}, \overline x_{i,j}, \overline \pi^*$}
\BinaryInfC{$\overline x_{i,k},\overline x_{j,\ell_3},\overline x_{i,j},\overline \pi^*$}
\DisplayProof
\end{equation}
The ``rest of $S_1$'' will handle similarly the
other literals~$\ell$ such that
$j\prec_\pi \ell$ and $i\not\prec_\pi \ell$.
Note that the final clause of~$S_1^\prime$ is the
transitivity axiom $T_{i,j,\ell_3}$.  The subproof $S_1^\prime$ is
formed in exactly the same way that $S_2^\prime$ was formed above. Namely,
depending on the status of the guard variable~$x_{r',s'}$
for~$T_{i,j,\ell_3}$, one of the following
is done:
\caseiti{}~the clause~$T_{i,j,\ell_3}$ is already learned
and can be used as is,
or \caseitii{}~one of $x_{r',s'}$ or~$\overline x_{r',s'}$ is
added to the clause and propagated down the proof,
or \caseitiii{}~the clause~$T_{i,j,\ell_3}$ is inferred
using resolution on~$x_{r',s'}$
and becomes learned.

To complete the construction of~$S_1$,
the inference (\ref{S1FormEq}) is repeated for each value of~$\ell$
such that $j\prec_\pi\ell$ and $i\not\prec_\pi\ell$.
The result is that $S_1$ has one unfinished leaf clause, and it corresponds to
the bipartite partial order~$\pi_1$.

\begin{figure}[t]
\psset{unit=0.06cm}     
\hfill
\begin{pspicture}(40,-10)(110,23)
\pscircle*(60,0){2pt}
\pscircle*(80,0){2pt}
\pscircle*(46,20){2pt}
\pscircle*(67,20){2pt}
\pscircle*(88,20){2pt}
\pscircle*(109,20){2pt}
\psline[arrowscale=1.4 1.2]{->}(60,0)(46.4,18.8)
\psline[arrowscale=1.4 1.2]{->}(60,0)(67,19)
\psline[arrowscale=1.4 1.2]{->}(60,0)(79,0)
\psline[arrowscale=1.4 1.2]{->}(60,0)(87.5,19)
\psline[arrowscale=1.4 1.2]{->}(80,0)(108,19)
\uput[180](60,0){$i$}
\uput[-10](80,0){$j$}
\uput[0](88,20){$k$}
\uput[160](46,20){$\ell_1$}
\uput[0](67,20){$\ell_2$}
\uput[0](109,20){$\ell_3$}
\rput(70,-10){(a) $\overline x_{i,k},\overline x_{j,\ell_3},\overline x_{i,j},\overline \pi^*$}
\end{pspicture}
\hfill
\begin{pspicture}(40,-10)(120,23)
\pscircle*(60,0){2pt}
\pscircle*(80,0){2pt}
\pscircle*(46,20){2pt}
\pscircle*(67,20){2pt}
\pscircle*(88,20){2pt}
\pscircle*(109,20){2pt}
\psline[arrowscale=1.4 1.2]{->}(60,0)(46.4,18.8)
\psline[arrowscale=1.4 1.2]{->}(60,0)(67,19)
\psline[arrowscale=1.4 1.2]{->}(60,0)(87.5,19)
\psline[arrowscale=1.4 1.2]{->}(60,0)(107.5,19)
\psline[arrowscale=1.4 1.2]{->}(60,0)(79,0)
\uput[180](60,0){$i$}
\uput[-10](80,0){$j$}
\uput[0](88,20){$k$}
\uput[160](46,20){$\ell_1$}
\uput[0](67,20){$\ell_2$}
\uput[0](109,20){$\ell_3$}
\rput(70,-10){(b) $\overline x_{i,k},\overline x_{i,\ell_3},\overline x_{i,j},\overline \pi^*$}
\end{pspicture}
\hfill
\caption{The partial orders for the fragment of~$S_1$ shown
in~(\ref{S1FormEq}).}
\label{S1Fig}
\end{figure}

That completes the construction of the subproof~$S$ for the subcase
of~\caseitiv{} where $T_{i,j,k}$ is of type~($\gamma$).  Now suppose
$T_{i,j,k}$ is of type~($\beta$).  (For instance, the values $i,j,k^\prime$
of Figure~\ref{BipartiteFig}.)
In this case the derivation~$S$ will have the form
\label{betaCaseEq}
\begin{prooftree}
\AxiomC{$T_{i,j,k}, x_{r,s}$}
\AxiomC{$T_{i,j,k}, \overline x_{r,s}$}
\BinaryInfC{$T_{i,j,k}$}
\AxiomC{\raisebox{7pt}{$S_3$}$\proofdots$}
\kernHyps{-1ex}
\noLine
\UnaryInfC{$\overline x_{i,j},\overline x_{i,k},\overline \pi_{-[jR(i),kR(i\cup j)]}$}
\BinaryInfC{$\overline x_{i,j},\overline x_{j,k},\overline \pi_{-[jR(i),kR(i\cup j)]}$}
\AxiomC{\raisebox{7pt}{$S_4$}$\proofdots$}
\kernHyps{-1ex}
\noLine
\UnaryInfC{$\overline x_{i,j},\overline x_{k,j},\overline \pi_{-[jR(i\cap k)]}$}
\BinaryInfC{$\overline x_{i,j},\overline \pi_{-[jR(i\cap k)]}$}
\AxiomC{\raisebox{7pt}{$S_5$}$\proofdots$}
\kernHyps{-1ex}
\noLine
\UnaryInfC{$\overline x_{j,i},\overline \pi_{-[iR(j)]}$}
\BinaryInfC{$\overline \pi$}
\end{prooftree}
where $x_{r,s}$ is the guard variable for~$T_{i,j,k}$.
We write $[\overline \pi_{-[jR(i\cap k)]}]$ to mean
the negations of literals in~$\pi$
omitting any literal $\overline x_{j,\ell}$ such that
both $i\prec_\pi \ell$ and $k\prec_\pi\ell$.
Similarly, $\overline \pi_{-[jR(i),kR(i\cup j)]}$ indicates
the negations of literals in~$\pi$,
omitting the literals $\overline x_{j,\ell}$
such that $i\prec_\pi \ell$ and the literals
$\overline x_{k,\ell}$ such that either
$i\prec_\pi \ell$ or $j\prec_\pi \ell$.

Note that
the resolution on~$x_{r,s}$ used to derive $T_{i,j,k}$ does
not violate regularity, since otherwise $T_{i,j,k}$ would
have been covered by case~\caseitii{}.  Likewise,
the resolutions on $x_{i,j}$, $x_{i,k}$ and $x_{j,k}$ do not
violate regularity since $T_{i,j,k}$ is
of type~($\beta$).


The subproof~$S_5$
is formed exactly like the subproof~$S_2$ above, with the exception
that now the literal $\overline x_{j,k}$ is not present.  Thus we omit
the description of~$S_5$.

We next describe the construction of the
subproof $S_4$.  Let $C_4$ be the final clause of~$S_4$;
it is easy to check that $\tau(C_4)$ is a partial specification
of a partial order.
As before, we must derive~$C_4$
from the clause $\falseBPOpisub 4$ where $\pi_4$ is the bipartite
partial order associated with the partial order~$\tau(C_4)$.
A typical situation is shown in Figure~\ref{S4Fig}.  As pictured
there, it is necessary to add the
literals $\overline x_{i,\ell}$ such that $j\prec_\pi \ell$ and
$i\not\prec_\pi \ell$, while removing $\overline x_{j,\ell}$; examples
of this are $\ell$ equal to $\ell_2$ and~$\ell_3$
in Figure~\ref{S4Fig}.
At the same time, we must add the
literals $\overline x_{k,\ell}$ such that $j\prec_\pi \ell$ and
$k\not\prec_\pi \ell$, while removing $\overline x_{j,\ell}$; examples
of this are $\ell$ equal to $\ell_1$ and, again,~$\ell_2$
in the same figure.

For a vertex~$\ell_3$ such that
$j\prec_\pi \ell_3$ and
$k\prec_\pi \ell_3$ but
$i\not\prec_\pi \ell_3$, this is done similarly to the inferences
(\ref{S2FormEq}) and~(\ref{S1FormEq})
but without the side literal~$\overline x_{j,k}$:
\begin{equation}\label{S4FormIonlyEq}
\AxiomC{\raisebox{7pt}{$S^\prime_4$}$\proofdots$}
\kernHyps{-1ex}
\noLine
\UnaryInfC{$\overline x_{i,j},\overline x_{j,\ell_3},\overline x_{\ell_3,i}$}
\AxiomC{$\proofdots$ \raisebox{1ex}{\hbox to 0pt{rest of~$S_4$}}}
\noLine
\UnaryInfC{$\overline x_{i,\ell_3},\overline x_{k,j}, \overline x_{i,j}, \overline \pi^*$}
\BinaryInfC{$\overline x_{j,\ell_3},\overline x_{k,j},\overline x_{i,j},\overline \pi^*$}
\DisplayProof
\end{equation}
Here $\overline \pi^*$ is
$\overline \pi_{-[jR(i\cap k);j\ell_3]}$.
The transitivity axiom~$T_{i,j,\ell_3}$ shown as the last line
of~$S_4^\prime$ is handled exactly as before.  This construction is
repeated for all such~$\ell_3$'s.

The vertices~$\ell_1$ such that
$j\prec_\pi \ell_1$ and
$i\prec_\pi \ell_1$ but
$k\not\prec_\pi \ell_1$ are handled in exactly the same way.
(The side literals~$\pi^*$ change each time to reflect the
literals that have already been replaced.)

Finally, consider a vertex~$\ell_2$ such that
$i\not\prec_\pi \ell_2$ and
$j\prec_\pi \ell_2$ and
$k\not\prec_\pi \ell_2$.  This is handled by the derivation
\begin{prooftree}
\AxiomC{\raisebox{7pt}{$S^{\prime\prime}_4$}$\proofdots$}
\kernHyps{-1ex}
\noLine
\UnaryInfC{$\overline x_{i,j},\overline x_{j,\ell_2},\overline x_{\ell_2,i}$}
\AxiomC{\raisebox{7pt}{$S^{\prime\prime\prime}_4$}$\proofdots$}
\kernHyps{-1ex}
\noLine
\UnaryInfC{$\overline x_{k,j},\overline x_{j,\ell_2},\overline x_{\ell_2,k}$}
\AxiomC{$\proofdots$ \raisebox{1ex}{\hbox to 0pt{rest of~$S_4$}}}
\noLine
\UnaryInfC{$\overline x_{i,j},\overline x_{i,\ell_2},\overline x_{k,j},\overline x_{k,\ell_2},\overline \pi^*$}
\BinaryInfC{$\overline x_{i,j},\overline x_{i,\ell_2},\overline x_{k,j},\overline x_{j,\ell_2},\overline \pi^*$}
\BinaryInfC{$\overline x_{i,j},\overline x_{k,j},\overline x_{j,\ell_2},\overline \pi^*$}
\end{prooftree}
As before, the set~$\pi^*$ of side literals is changed to reflect
the literals that have already been added and removed as~$S_4$ is being
created.
The subproofs $S^{\prime\prime}_4$ and~$S^{\prime\prime\prime}_4$
of the transitivity axioms $T_{i,j,\ell_2}$ and~$T_{k,j,\ell_2}$
are handled exactly as before, depending on the status of their
guard variables.

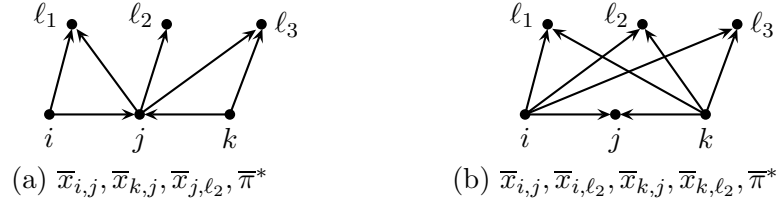
\begin{figure}[t]
\psset{unit=0.06cm}     
\hfill
\begin{pspicture}(16,-15)(72,24)
\pscircle*(20,0){2pt}
\pscircle*(40,0){2pt}
\pscircle*(60,0){2pt}
\pscircle*(25,20){2pt}
\pscircle*(46,20){2pt}
\pscircle*(67,20){2pt}
\psline[arrowscale=1.4 1.2]{->}(20,0)(25,19)
\psline[arrowscale=1.4 1.2]{->}(40,0)(26,19)
\psline[arrowscale=1.4 1.2]{->}(40,0)(46,19)
\psline[arrowscale=1.4 1.2]{->}(40,0)(65.8,18.8)
\psline[arrowscale=1.4 1.2]{->}(60,0)(67.2,19)
\psline[arrowscale=1.4 1.2]{->}(20,0)(39,0)
\psline[arrowscale=1.4 1.2]{->}(60,0)(41,0)
\uput[-90](20,0){$i$}
\uput[-90](40,0){$j$}
\uput[-90](60,0){$k$}
\uput[160](25,20){$\ell_1$}
\uput[160](46,20){$\ell_2$}
\uput[0](67,20){$\ell_3$}
\rput(40,-15){(a) $\overline x_{i,j},\overline x_{k,j},\overline x_{j,\ell_2},\overline \pi^*$}
\end{pspicture}
\hfill
\begin{pspicture}(16,-15)(72,24)
\pscircle*(20,0){2pt}
\pscircle*(40,0){2pt}
\pscircle*(60,0){2pt}
\pscircle*(25,20){2pt}
\pscircle*(46,20){2pt}
\pscircle*(67,20){2pt}
\psline[arrowscale=1.4 1.2]{->}(20,0)(25,19)
\psline[arrowscale=1.4 1.2]{->}(20,0)(45.5,19)
\psline[arrowscale=1.4 1.2]{->}(20,0)(66.0,19)
\psline[arrowscale=1.4 1.2]{->}(60,0)(26,19)
\psline[arrowscale=1.4 1.2]{->}(60,0)(46.5,19)
\psline[arrowscale=1.4 1.2]{->}(60,0)(67,19)
\psline[arrowscale=1.4 1.2]{->}(20,0)(39,0)
\psline[arrowscale=1.4 1.2]{->}(60,0)(41,0)
\uput[-90](20,0){$i$}
\uput[-90](40,0){$j$}
\uput[-90](60,0){$k$}
\uput[160](25,20){$\ell_1$}
\uput[160](46,20){$\ell_2$}
\uput[0](67,20){$\ell_3$}
\rput(40,-15){(b) $\overline x_{i,j},\overline x_{i,\ell_2},\overline x_{k,j},\overline x_{k,\ell_2},\overline \pi^*$}
\end{pspicture}
\hfill
\caption{The partial orders as changed by~$S_4$.}
\label{S4Fig}
\end{figure}

Finally, we describe how to form the subproof~$S_3$.  For this, we
must form the bipartite partial order~$\pi_3$ which is associated with
the partial order~$\tau(C_3)$,
where $C_3$ is the final clause of~$S_3$.
To obtain $\overline\pi_3$, we need to add
the literals $\overline x_{i,\ell}$ such that
$i\not\prec_\pi \ell$ and such that
either $ j\prec_\pi \ell$ or $k\prec_\pi \ell$, while removing
any literals $\overline x_{j,\ell}$ and $\overline x_{k,\ell}$.
This is done by exactly the same
construction used above in~(\ref{S4FormIonlyEq}).  The literals
in $\overline \pi_{-[jR(i);kR(i\cup j)]}$ are exactly the literals needed
to carry this out.  The construction is quite similar to the above
constructions, and we omit any further description.

That completes the description of how to
construct the LR partial refutations~$R_{i}$.
The process stops once some~$R_i$ has no unfinished clauses.
We claim that the process stops after polynomially many stages.

To prove this, recall that $R_{i+1}$ is formed
by handling the leftmost unfinished clause using
one of cases \caseiti{}-\caseitiv{}.  In the first three cases,
the unfinished clause is replaced by a derivation based on $P_\pi$
for some bipartite order~$\pi$.  Since $P_\pi$ has size $O(n^3)$, this
means that the number of clauses in~$R_{i+1}$ is at most the
number of clauses in~$R_i$ plus $O(n^3)$.  Also, by construction, $R_{i+1}$~has
one fewer unfinished clauses than~$R_i$.  In case~\caseitiv{} however,
$R_{i+1}$ is formed by adding up to $O(n)$ many clauses to~$R_i$
plus adding either two or three new unfinished leaf clauses.  In addition,
case~\caseitiv{} always causes at least one transitivity axiom~$T_{i,j,k}$
to be learned.
Therefore, case~\caseitiv{} can occur at most $2{n \choose 3} = O(n^3)$ times.
Consequently at most $3\cdot 2 {n \choose 3} = O(n^3)$ many
unfinished clauses are added throughout the entire process.
It follows that the process stops with~$R_i$ having no unfinished clauses
for some $i\le 6 {n \choose 3}=O(n^3)$.  Therefore there is a pool
refutation of $\GGT_n$ with $O(n^6)$ lines.  Since the $\GGT_n$ principle
has $O(n^3)$ many clauses, the number of inferences in
the refutation is bounded by a quadratic
polynomial of the number of the clauses being refuted.

By inspection, each clause in the refutation contains $O(n^2)$
literals.  This is because the largest clauses are those corresponding
to (small modifications of) bipartite partial orders, and because
bipartite partial orders can contain at most $O(n^2)$ many ordered pairs.
Furthermore, the refutations~$P_n$ for the graph tautology $\GT_n$ contain
only clauses of size $O(n^2)$.
\hfill \\
Q.E.D. Theorem~\ref{PoolResGgtThm} \hfill $\qed$
\end{proof}

Theorem~\ref{regRtiGgtThm} is proved with nearly the same construction.  In fact, the only change needed is the construction of~$S$ from~$P_\pi^\prime$.
Recall that in the proof of Theorem~\ref{PoolResGgtThm}, the
pool derivation~$S$ was formed
by using a depth-first traversal of~$P_\pi^\prime$.  This is not
sufficient for
Theorem~\ref{regRtiGgtThm}, since now the derivation~$S$ must use only input
lemmas.  Instead, we use Theorem~3.3 of~\cite{BHJ:ResTreeLearning}, which
states that a (regular) dag-like resolution derivation can
be transformed into a (regular) tree-like derivation with input lemmas.
Forming $S$ in this way from~$P^\prime_\pi$ suffices for the
proof of Theorem~\ref{regRtiGgtThm}: the lemmas of~$S$ are either
transitive closure axioms derived earlier in~$R_i$ or are
derived by input subproofs earlier in the post-ordering of~$S$.
Since the transitive closure axioms that appeared
earlier in~$R_i$ were derived by resolving two $\GGT_n$ axioms,
the lemmas used in~$S$ are all input lemmas.

The transformation of Theorem~3.3 of~\cite{BHJ:ResTreeLearning}
may multiply the size of the derivation
by the depth of the original derivation.  Since
it is possible to form the proofs~$P_\pi$
with depth $O(n)$, the overall size of the
regRTI refutation is $O(n^7)$.
This completes the proof of Theorem~\ref{regRtiGgtThm}.
\hfill $\qed$

\section{Greedy, unit-propagating DPLL with clause learning}\label{GreedySect}

This section discusses how the refutations in
Theorems \ref{PoolResGgtThm} and~\ref{regRtiGgtThm}
can be modified so as to ensure
that the refutations are greedy and unit-propagating.

\begin{definition}
Let $R$ be a
tree-like regular w-resolution refutation
with input lemmas.
Let $\Gamma(C)$ be
the set of clauses of $\Gamma$
plus every clause $D <_R C$ in~$R$ that has
been derived by an input subproof
and thus is available as a learned
clause to aid in the derivation of~$C$.

The refutation $R$ is {\em greedy and unit-propagating}
provided
that, for each clause~$C$ of~$R$, if
there is an input derivation
from~$\Gamma(C)$ of some clause $C^\prime\subseteq C^+$
which does not resolve on any literal in~$C^+$,
then $C$~is derived in~$R$ by such a
derivation.
\end{definition}

Note that, as proved in~\cite{BKS:clauselearning},
the condition that there is a input derivation from $\Gamma(C)$ of
some $C^\prime \subseteq C^+$ which does not resolve
on $C^+$ literals
is equivalent to the condition that if all literals of $C^+$
are set false
then unit propagation yields a contradiction from $\Gamma(C)$.
(In \cite{BKS:clauselearning}, these are called ``trivial'' proofs.)
This justifies the terminology ``unit-propagating''.

The definition of ``greedy and unit-propagating''
is actually a bit more restrictive than
necessary, since DPLL algorithms may actually learn multiple
clauses at once, and this can mean that $C$ is not derived from
a single input proof but rather from a combination of several input proofs
as described in the proof of Theorem 5.1 in~\cite{BHJ:ResTreeLearning}.

\begin{theorem}\label{GreedyRegRtiThm}
The guarded graph tautology formulas $\GGT_n$ have greedy, unit-propagating,
polynomial size, tree-like, regular w-resolution refutations with input lemmas.
\end{theorem}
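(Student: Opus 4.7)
The plan is to obtain the greedy, unit-propagating refutation by post-processing the regRTI refutation $R$ of Theorem~\ref{regRtiGgtThm}. Since every regRTI proof is in particular a regWRTI proof, we are free to use w-resolution (with phantom literals) in our modifications, and this extra slack is what will let us pad partial derivations extracted from unit-propagation traces up to the clauses actually labelling nodes of $R$.

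First, I would traverse the clauses of $R$ in the postorder $<_R$, so that when we reach a clause $C$ the set $\Gamma(C)$ is already finalized. At $C$ we test whether unit propagation from $\Gamma(C)$ under the partial assignment falsifying all literals of $C^+$ derives $\Box$. By the equivalence noted after the definition and proved in~\cite{BKS:clauselearning}, this is the same as asking whether some $C'\subseteq C^+$ has an input derivation from $\Gamma(C)$ avoiding every variable of $C^+$ as a resolution variable. If no such derivation exists, the greedy condition is vacuous at $C$ and we leave the subtree rooted at $C$ unchanged.

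If such a derivation does exist, I would extract from the unit-propagation trace a tree-like regular input derivation of some $C'\subseteq C^+$, and then use w-resolution to pad $C'$ up to $C$ by treating each literal of $C\setminus C'$ as a phantom in an appropriate inference of the new subtree. The result is a tree-like regular input derivation of exactly $C$, of size polynomial in $|\Gamma(C)|$, that resolves on no $C^+$-literal. Splicing this subderivation in place of the old subtree at $C$ preserves tree-likeness, regularity, and the input-lemma property, and only augments $\Gamma$ for clauses later in the post-ordering, so the postorder pass terminates correctly. Since $R$ contains polynomially many clauses and each splice contributes polynomially many more, the final refutation is still of polynomial size in $n$.

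The main obstacle is verifying that the greedy/unit-propagating condition holds at every clause of the post-processed refutation, and not merely at the roots of splices. For any internal clause $D$ of a splice one has $\Gamma(D)\supseteq\Gamma(C)$ and $D^+\supseteq C^+$, so unit propagation still triggers at $D$. To discharge this obligation uniformly, I would extract the trace in a canonical (e.g.\ leftmost-first) order, so that the suffix of the trace beginning below $D$ is itself an input derivation of a subset of $D^+$ avoiding $D^+$-literals as resolution variables; after the same w-resolution padding it witnesses the greedy condition at $D$. A straightforward induction on the length of the trace then shows that the entire post-processed proof is greedy and unit-propagating, yielding the theorem.
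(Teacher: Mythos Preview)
Your post-processing approach has two genuine gaps.

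First, the padding step is impossible as stated. W-resolution with phantom literals lets you carry out an inference when the resolved literal is \emph{missing} from a premise, but the conclusion is still contained in the union of the premises; it never introduces new literals. So you cannot ``pad $C'$ up to $C$'' when $C'\subsetneq C$. The paper handles the mismatch in the opposite direction: it replaces the subtree by a derivation of~$C'$ and then adjusts the clauses \emph{below} $C$ (toward the root). Literals in $C'\setminus C$ are propagated downward until they meet a clause of~$C^+$ already containing them; literals in $C\setminus C'$ are deleted from clauses below, and the inferences that lose their resolution literal become w-resolution inferences with phantoms. That is where the phantoms are actually used.

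Second, your claim that splicing ``only augments $\Gamma$ for clauses later in the post-ordering'' is false. The subtree rooted at~$C$ in the refutation of Theorem~\ref{regRtiGgtThm} may contain the input subderivation that first learns some transitivity clause~$T_{i,j,k}$ (via the analogue of case~\caseitiii{} there). That clause can be used as a lemma at leaves far to the right of~$C$. Your splice, being an input derivation of some $C'\subseteq C^+$, has no reason to re-derive~$T_{i,j,k}$, so those later lemma uses become dangling and the resulting tree is no longer a valid regRTI proof. A postorder pass over a finished proof cannot repair this, because the damage is to nodes you have not yet visited and may never replace.

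The paper avoids both problems by not post-processing at all. It rebuilds the refutation in a single \emph{preorder} (root-first) scan, interleaving the greedy check with the construction of Theorems~\ref{PoolResGgtThm} and~\ref{regRtiGgtThm}: at each node reached in preorder it tests whether an input derivation of some $C'\subseteq C^+$ exists (case~\caseitip{}); if so it inserts that derivation and propagates the mismatch downward as above; otherwise it continues, attaching~$P_\pi^*$ at unfinished leaves and backtracking to insert the case-\caseitiv{} gadget when a blocking transitivity clause is encountered. Because nothing to the right of the current node has been built yet, no learned lemma can be orphaned, and the polynomial size bound follows from the same counting as before.
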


\begin{proof}
We indicate how to modify the proofs
of Theorems \ref{PoolResGgtThm} and~\ref{regRtiGgtThm}.
We again build tree-like LR partial
refutations satisfying the same
properties a.-e.\ as before, except now w-resolution inferences
are permitted.  Instead of
being formed in distinct stages $R_0, R_1, R_2,\ldots$,
the w-resolution refutation~$R$ is constructed by one continuing process.
This construction incorporates all
of transformations \caseiti{}-\caseitiv{} and also incorporates
the construction of Theorem 3.3 of~\cite{BHJ:ResTreeLearning}.

At each point in the construction, we will be scanning
the so-far constructed partial w-resolution refutation~$R$
in preorder, namely in
depth-first, left-to-right order.  That is to say,
the construction recursively processes a node in the proof tree,
then its left subtree,
and then its right subtree.  During most steps of
the preorder scan,
the partial refutation~$R$ is modified by changing
the part that comes subsequently in the preorder,
but the construction may also add and remove literals
from clauses below the current clause~$C$.
When the preorder scan reaches a clause~$C$ that
has an input derivation~$R^\prime$ from $\Gamma(C)$
of some $C^\prime\subseteq C$
that does not
resolve on~$C^+$, then some such~$R^\prime$ is inserted
into~$R$ at that point.
When the preorder scan reaches an unfinished leaf~$C=C_0$,
then a
(possibly exponentially large) derivation~$P_\pi^*$ is added
as its derivation.  The construction continues processing~$R$
by scanning~$P_\pi^*$ in preorder, with the end result
that either (1)~$P_\pi^*$ is succcessfully processed and reduced
to only polynomial size or (2)~the preorder scan of~$P_\pi^*$
reaches a transitivity clause~$T_{i,j,k}$ of the type that
triggered case~\caseitiv{} of Theorem~\ref{PoolResGgtThm}.
In the latter case,
the preorder scan backs up to the root clause~$C_0$ of~$P_\pi^*$,
replaces $P_\pi^*$ with the
derivation~$S$ constructed in case~\caseitiv{}
of Theorem~\ref{PoolResGgtThm},
and restarts the preorder scan at clause~$C_0$.

We describe the actions of the preorder scan in more
detail. Initially, $R$ is the ``empty''
derivation, with the empty clause as its only
(unfinished) clause.  A clause~$C$ encountered during
the preorder scan of~$R$ is handled by one of
the following.

\begin{description}
\item[\caseitip] Suppose that some $C^\prime\subseteq C^+$
can be derived by an input
derivation from~$\Gamma(C)$ that does not resolve on any literals
of~$C^+$.
Fix any such~$C^\prime\subseteq C^+$,
and replace the subderivation
in~$R$ of the clause~$C$ with such a derivation
of~$C^\prime$ from $\Gamma(C)$.
Any extra literals in $C^\prime\setminus C$
are in $C^+$ and are propagated down
until reaching a clause where
they already appear, or occur as a phantom
literal.  There may also be
literals in $C \setminus C^\prime$: these
literals are removed as necessary from clauses
below~$C^\prime$ in~$R$ to maintain the property
of~$R$ containing correct w-resolution inferences.
Note that this can convert resolution inferences into
w-resolution inferences.
The clause $C^\prime$ is now a learned clause.

Note that this case includes
transitivity clauses $C=C^\prime=T_{i,j,k}$
that satisfy the conditions of cases \caseiti{}-\caseitiii{}
of Theorem~\ref{PoolResGgtThm}
\item[\caseitiip] If case~\caseitip{} does not apply,
and $C$ is not a leaf node, then $R$ is unchanged
at this point and the depth-first traversal proceeds
to the next clause.
\item[\caseitiiip] If $C$ is an unfinished clause
of the form $\falseBPOpi$, let $P_\pi$ be
as before.
Recall that no literal in~$C^+$ is resolved on in~$P_\pi$.
Unwind the proof~$P_\pi$
into a tree-like regular refutation~$P_\pi^*$
that is possibly exponentially big, and attach
$P_\pi^*$ to~$R$ as a proof of~$C$.  Mark the
position of~$C$ by setting $C_0=C$
in case it is necessary to
later backtrack to~$C$.  Then continue the preorder
scan by traversing into~$P_\pi^*$.

\item[\caseitivp] Otherwise, $C$ is a leaf clause of
the form $T_{i,j,k}$ and since case~\caseitip{} does not apply,
one of $T_{i,j,k}$'s guard literals~$x$,
namely $x=x_{r,s}$ or $x=\overline x_{r,s}$,
is in~$C^+$.  If $C$ is {\em not} inside
the most recently added~$P_\pi^*$ or if $x\in C_0^+$,
then replace
$T_{i,j,k}$ with $T_{i,j,k}\lor x$,
and propagate the literal~$x$
downward in the refutation until reaching a
clause where it appears as a literal or a phantom literal.
Otherwise, the preorder scan backtracks to the
root clause~$C_0$ of $P_\pi^*$, and replaces $P_\pi^*$ with
the partial resolution refutation~$S$ formed in
case~\caseitiv{} of Theorem~\ref{PoolResGgtThm}.
\end{description}

It is clear that this process eventually halts with
a valid greedy, unit-propagating, tree-like w-resolution
refutation.  We claim that it also yields
a polynomial size refutation.  Consider
what happens when a derivation~$P_\pi^*$ is inserted.
If case~\caseitivp{} is triggered, then the proof~$S$ is
inserted in place of~$P_\pi^*$, so the size
of~$P_\pi^*$ does not matter.
If case~\caseitivp{} is not triggered,
then, as in the proof of Theorem 3.3 of~\cite{BHJ:ResTreeLearning},
the preorder scan of~$P_\pi^*$ modifies (the possibly
exponentially large) $P_\pi^*$ to
have polynomial size.
Indeed, as argued in~\cite{BHJ:ResTreeLearning},
any clause~$C$
in~$P_\pi^*$ will occur at most $d_C$~times in the modified
version of~$P_\pi^*$ where $d_C$ is the depth of the derivation
of~$C$ in the original~$P_\pi$.  This is because $C$ will
have been learned by an input derivation once it has appeared no more than~$d_C$
times in the modified derivation~$P_\pi^*$.  This is
proved by induction on~$d_C$.

Consider the situation where $S$ has
just been inserted in place
of~$P_\pi^*$ in case~\caseitiiip{}.
The transitivity clause~$T_{i,j,k}$
is not yet learned at this point,
since otherwise case~\caseitip{} would have
applied.  We claim, however, that $T_{i,j,k}$
is learned as~$S$ is traversed.  To prove this,
since $T_{i,j,k}$ is manifestly derived
by an input derivation and since its guard
literals $x_{r,s}$ and~$\overline x_{r,s}$ do
not appear in~$C_0^+$, it is
enough to show that the clause~$T_{i,j,k}$ is
reached in the preorder traversal scan of~$S$.
This, however, is an immediate consequence of the fact that
$T_{i,j,k}$ was reached in the preorder scan of~$P_\pi^*$
and triggered case~\caseitivp{},
since if case~\caseitip{} applies to $T_{i,j,k}$
or to any clause below $T_{i,j,k}$
in the preorder scan of~$S$,
then it certainly also applies $T_{i,j,k}$ or
some clause below $T_{i,j,k}$
in the preorder scan of~$P_\pi^*$.

The size of the final refutation~$R$ is bounded
the same way as in the proof of Theorem~\ref{regRtiGgtThm},
and this completes the proof of Theorem~\ref{GreedyRegRtiThm}.
\hfill $\qed$
\end{proof}

\begin{theorem}\label{DPLLpolySizeThm}
There are DPLL search procedures with clause
learning which are greedy, unit-propagating, but do not use
restarts, that
refute the $\GGT_n$ clauses in polynomial time.
\end{theorem}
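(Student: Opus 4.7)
The plan is to derive Theorem~\ref{DPLLpolySizeThm} from Theorem~\ref{GreedyRegRtiThm} via the correspondence between tree-like regular w-resolution refutations with input lemmas (regWRTI proofs) and DPLL search with clause learning established in~\cite{BHJ:ResTreeLearning} (in particular Theorem~5.6 referenced earlier). The idea is that a regWRTI refutation~$R$ can be ``read off'' as a trace of a DPLL search: each internal node of~$R$, together with its resolution variable, corresponds to a branching decision; each leaf labelled with an initial clause corresponds to a conflict that terminates a branch; and each leaf labelled with an input lemma corresponds to a clause that was learned at the matching earlier point in the search and is now used to prune a branch by unit propagation.

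First I would restate the simulation lemma from~\cite{BHJ:ResTreeLearning} in the precise form we need: given any regWRTI refutation~$R$ of a CNF~$F$, there is a DPLL-with-clause-learning search that refutes~$F$ using variable and polarity choices determined by the preorder traversal of~$R$, with running time polynomial in $|R|+|F|$. Next, I would verify that when the input regWRTI refutation is greedy and unit-propagating in the sense of the preceding definition, the simulating DPLL search is itself greedy and unit-propagating: the greedy/unit-propagating condition on~$R$ says precisely that whenever the current partial assignment (given by~$C^+$ along the branch) together with the already-learned clauses~$\Gamma(C)$ would produce a contradiction via unit propagation, the proof derives~$C$ by an input derivation rather than branching further. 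Translated into the simulation, this means the DPLL algorithm backtracks (and learns) immediately whenever unit propagation yields a conflict, which is exactly the greedy, unit-propagating discipline. Since the construction in~\cite{BHJ:ResTreeLearning} never introduces restarts, the resulting algorithm is also restart-free.

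Applying this translation to the refutations produced by Theorem~\ref{GreedyRegRtiThm}, which are of size~$O(n^7)$, yields a DPLL algorithm that refutes~$\GGT_n$ in polynomial time. The only nontrivial point is ensuring that the variable selection and polarity heuristic required by the simulation is implementable in polynomial time at each step: this is straightforward since the choices are dictated by the explicit inductive construction of~$R$ in the proof of Theorem~\ref{GreedyRegRtiThm}, which processes the bipartite partial orders~$\pi$ and the associated derivations~$P_\pi$ in a locally computable manner. In particular, at each node the algorithm can decide in polynomial time whether to branch (and on which variable, based on the current~$\pi$) or to backtrack via an input derivation, so each branching decision can be made in time polynomial in~$n$.

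The main obstacle I anticipate is not conceptual but bookkeeping: one must check that the way Theorem~\ref{GreedyRegRtiThm}'s construction interleaves case~\caseitip{} (which inserts a freshly discovered input derivation and may retroactively remove literals from clauses below) with case~\caseitiiip{} (which expands a bipartite-order leaf via~$P_\pi^*$) is faithfully realisable by an online DPLL search that cannot look ahead. The resolution is that the DPLL search need not look ahead: it suffices to branch according to the variables resolved on in~$P_\pi$, detect conflicts by unit propagation using the clauses of~$\GGT_n$ and previously learned transitivity clauses, and learn new transitivity clauses~$T_{i,j,k}$ exactly when case~\caseitivp{} of the proof of Theorem~\ref{GreedyRegRtiThm} would be triggered. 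Making this correspondence explicit, together with the $O(n^7)$ size bound, gives the required polynomial-time greedy, unit-propagating, restart-free DPLL refutation.
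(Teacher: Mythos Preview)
Your approach is correct but differs from the paper's. You derive the theorem by applying the regWRTI-to-DPLL simulation of~\cite{BHJ:ResTreeLearning} as a black box to the greedy, unit-propagating refutation of Theorem~\ref{GreedyRegRtiThm}, and then argue that the greedy/unit-propagating property survives the translation. The paper instead extracts a single structural observation from the proof of Theorem~\ref{GreedyRegRtiThm}---namely that the construction succeeds provided only that each transitivity clause~$T_{i,j,k}$ is learned as soon as possible, independently of which other clauses happen to be learned---and uses this to write down an explicit four-step DPLL strategy directly (unit-propagate and learn a~$T_{i,j,k}$ when a conflict arises; otherwise branch to complete the transitive closure of the current bipartite order; otherwise traverse~$P_\pi$ if no guard literal blocks it; otherwise branch so as to learn a blocking~$T_{i,j,k}$). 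Your route is more modular and would apply to any principle with a greedy, unit-propagating regWRTI refutation, but it obliges you to check that the simulation of~\cite{BHJ:ResTreeLearning} really does preserve greediness and to pin down an online variable-selection heuristic that reproduces the preorder of the refutation; the paper's route sidesteps both issues by exhibiting the algorithm concretely and noting its robustness to the learning scheme. Both arguments are sketches at roughly the same level of detail.
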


We give a sketch of the
proof.
The construction for the proof of Theorem~\ref{GreedyRegRtiThm}
requires only that the clauses~$T_{i,j,k}$ are learned
whenever possible, and does not depend on whether any other
clauses are learned.
This means that the following
algorithm for DPLL search with clause learning
will always succeed in finding a refutation of
the $\GGT_n$ clauses: At each point, there is
a partial assignment~$\tau$.
The search algorithm must do one of the following:
\setlength{\parsep}{0in}
\setlength{\itemsep}{0in}
\begin{description}
\item[\rm (1)] If unit propagation yields a
contradiction, then learn a clause $T_{i,j,k}$ if
possible, and backtrack.
\item[\rm (2)] Otherwise, if there are
any literals in the transitive closure of
the bipartite partial order
associated with~$\tau$ which are not assigned
a value,
branch on one of these literals to set
its value.  (One of the true or false assignments
yields an immediate
conflict, and may allow learning a clause~$T_{i,j,k}$.)
\item[\rm (3)] Otherwise,
determine whether there is a clause~$T_{i,j,k}$
which is used in the proof~$P_\pi$ whose guard literals
are resolved on in~$P_\pi$.  (See Lemma~\ref{BpoDerivationLm}.)
If not, do a DPLL traversal of~$P_\pi$,
eventually backtracking from the assignment~$\tau$.
\item[\rm (4)] Otherwise, some clause
$T_{i,j,k}$ blocks
$P_\pi$ from being traversed in polynomial time.  Branch on
its variables in the order given
in the proof of Theorem~\ref{PoolResGgtThm}.
From this, learn the clause
$T_{i,j,k}$.
\end{description}

\section{Guarded, xor-ified, pebbling tautologies} \label{sec:GPeb}
This section gives polynomial size regRTI refutations
for the
guarded pebbling tautologies which
Urquhart~\cite{Urquhart:regularresolution} proved require
exponential size regular resolution proofs.

\begin{definition}
A {\em pointed dag}~$G=(V,E)$ is a directed acyclic graph with a
single sink~$t$ such that every vertex in~$G$ has
indegree either $0$ or~$2$.
The pebbling formula $\Peb(G)$
for a pointed dag~$G$
is the unsatisfiable formula in the variables~$x_v$ for $v\in V$
consisting of the following clauses:
\begin{itemize}
\setlength{\parsep}{0pt}
\setlength{\itemsep}{0pt}
\item[$(\alpha)$] $x_s$, for every source $s \in V$,
\item[$(\beta)$] $\negx_u\lor\negx_v \lor x_w$,
for every vertex~$w$ with two (immediate) predecessors $u$ and~$v$,
\item[$(\gamma)$] $\negx_t$, for $t$ the sink vertex.
\end{itemize}
\end{definition}
We next define Urquhart's ``xor-ification'' of a pebbling tautology
clause.  Xor-ification, for two variables,
is due originally to Alekhnovich
and Razborov, see Ben-Sasson~\cite{BenSasson:sizespace}, and is similar
to the ``orification'' used by~\cite{BIW:nearoptimal}.
The intuition for xor-ification is that each variable~$x_u$ is
replaced by a set of clauses which
express the $k$-fold exclusive or
$x_{u,1}\oplus \cdots\oplus x_{u,k}$.
\begin{definition}
Let $k>0$, and $x_u$ be a variable of the $\Peb(G)$ principle.
Let $x_u^1$ be $x_u$, and $x_u^{-1}$ be its complement~$\negx_u$.
Define $x_u^\koplus$ to be the set of clauses of the form
\begin{equation}\label{eq:xorDefn}
x_{u,1}^{i_1} \lor x_{u,2}^{i_2} \lor \cdots \lor x_{u,k}^{i_k}
\end{equation}
where an even number of the values $i_j$ equal~$-1$ (and the rest equal~$1$).
Dually, define $\negx_u ^\koplus$ to be the set of
clauses of the form (\ref{eq:xorDefn}) with an odd number of the $i_j$
equal to~$-1$.
Note there are $2^{k-1}$ clauses in each of $x_u^\koplus$
and~$\negx_u^\koplus$.
If $C$ is a clause
$C = z_1\lor \cdots \lor z_\ell$, each $z_i$ of the form $x_u$ or~$\negx_u$,
then $C^\koplus$ is the
set of clauses of the form
\[
C_1 \lor C_2 \lor \cdots \lor C_\ell,
\]
where each $C_i \in z_i^\koplus$.  There are $2^{(k-1)\ell}$
many clauses in~$C^\koplus$.
\end{definition}
\begin{definition}
The {\em xor-ified pebbling formula $\Peb^\koplus(G)$}
is the set of clauses $C^\koplus$ for $C\in\Peb(G)$.
\end{definition}
\begin{definition}
Let $G$ be a pointed graph with $n$~vertices and $k=k(n)>0$.
Let $\rho$ be a function with domain the set of clauses
of $\Peb^\koplus(G)$ and with range the set of variables~$x_{u,i}$
of $\Peb^\koplus(G)$, such that, for all~$C$,
the variable $\rho(C)$ is not used in~$C$.
The {\em guarded xor-ified pebbling formula $\GPeb^\koplus(G)$}
is the set of clauses of the forms
\[
C\lor \rho(C) \qquad\qquad \hbox{and} \qquad\qquad
C \lor \overline{\rho(C)}
\]
for $C\in\Peb^\koplus(G)$.
\end{definition}
Note that, as in the case of $\GGT_n$,
the $\GPeb^\koplus(G)$ clauses depend on
the choice of~$\rho$; again this is suppressed in the notation.
For a dag $G$ with $n$ vertices, the formula $\GPeb^\koplus(G)$
consists of $O(2^{3k}n)$ clauses.

Our definitions of $\Peb^\koplus(G)$ and $\GPeb^\koplus(G)$ differ
somewhat from Urquhart's, but these differences are inessential and
make no difference to asymptotic proof sizes.

Of course, the $\Peb^\koplus(G)$ clauses
are readily derivable from the $\GPeb^\koplus(G)$
clauses by resolving on the guard literals as given
by~$\rho$.  There are simple polynomial size regular resolution
refutations of the $\Peb^\koplus(G)$ principles; hence
there are polynomial size, but not regular, resolution
refutations of the $\GPeb^\koplus(G)$ principles.
Indeed,
Urquhart~\cite{Urquhart:regularresolution}
proved that there are pointed graphs~$G$ with $n$ vertices
and values $k=k(n)=O(\log\log n)$, and
functions~$\rho$, such
that regular resolution refutations
of the $\GPeb^\koplus(G)$ clauses require
size $2^{\Omega(n/((\log n)^2 \log\log n))}$.

We shall show that the $\GPeb^\koplus(G)$ principles have polynomial size
proofs in regRTI and in pool resolution:
\begin{theorem}\label{GPebProofsThm}
The guarded xor-ified pebbling formulas $\GPeb^{\koplus}(G)$
have polynomial size regRTI refutations, and thus polynomial
size pool refutations.
\end{theorem}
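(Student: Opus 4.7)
The plan is to follow the same blueprint as the proof of Theorem~\ref{PoolResGgtThm}: build a sequence of LR partial refutations in which at each stage the leftmost unfinished leaf is processed, terminate with a polynomial size regRTL refutation, and then convert it to a regRTI refutation using Theorem~3.3 of~\cite{BHJ:ResTreeLearning}. Two adaptations are needed: (a)~replacing bipartite partial orders with an appropriate notion of \emph{partial pebbling state}; and (b)~absorbing the $2^{O(k)}$-fold blowup coming from xor-ification, which is polynomial since $k=O(\log\log n)$.

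First, I would define a \emph{partial pebbling state} $\pi$ to be a partial truth assignment to the variables $x_{v,i}$ that records the literals set along the branch from the root down to the current clause. The unfinished leaf clause associated with $\pi$, analogous to $\falseBPOpi$, is the disjunction of the negations of those literals. The analog of Lemma~\ref{BpoDerivationLm} would then assert: for every $\pi$, there is a polynomial size regular derivation $P_\pi$ of this clause from the $\Peb^\koplus(G)$ axioms, which resolves only on variables $x_{v,i}$ at vertices $v$ whose XOR value is not already fixed by $\pi$. Such a $P_\pi$ can be built from the standard recursive regular refutation of $\Peb(G)$ (starting at $\negx_t^\koplus$ and working backward along $G$ by applying inductive pebbling axioms at each vertex), with each original pebbling step ``xor-lifted'' by branching over the $k$ variables $x_{v,i}$ so as to interact with all $2^{k-1}$ clauses of the relevant $x_v^\koplus$ or $\negx_v^\koplus$ set, and with vertices whose XOR values are fixed by $\pi$ short-circuited so that they contribute only side literals. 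This derivation uses each $x_{v,i}$ as a resolution variable at most once along any root-to-leaf path, so $P_\pi$ is regular.

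Second, with $P_\pi$ in hand, the main construction proceeds exactly as in the proof of Theorem~\ref{PoolResGgtThm}: at the leftmost unfinished leaf labeled by $\pi$, the derivation $P_\pi$ is attached, and each guarded xor-ified pebbling axiom $C$ appearing in $P_\pi$ is handled by the same four cases. In case~\caseiti{}, $C$ is already learned and is used as a lemma; in case~\caseitii{}, its guard literal lies in $C^+$ and is added as a side literal and propagated downward, relying on the regularity of $P_\pi$; in case~\caseitiii{}, the guard variable is unused below $C$ and we derive and learn the unguarded xor-ified clause by a single resolution step; in case~\caseitiv{}, we abandon $P_\pi$ and build a short branching subproof that learns $C$ and produces a small number of new unfinished leaves with slightly modified partial pebbling states. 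Since only polynomially many distinct xor-ified pebbling clauses exist, case~\caseitiv{} can trigger only polynomially often, and the final regRTL refutation has polynomial size.

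I expect the main obstacle to be the analog of the subproofs $S_1,\ldots,S_5$ built under case~\caseitiv{} of the $\GGT$ construction: the branches introduced in case~\caseitiv{} must be massaged so that each new unfinished leaf's clause coincides precisely with the disjunction prescribed by its partial pebbling state (property~e.\ of LR partial refutations). Unlike the $\GGT$ case, where swapping one literal for another required only a single transitivity axiom, here swapping a pebble value at a vertex $v$ under xor-ification requires processing all $2^{k-1}$ clauses of the corresponding $x_v^\koplus$ or $\negx_v^\koplus$, each time re-invoking cases \caseiti{}--\caseitiii{} for the clause's guard. Careful accounting is needed to ensure that each such micro-swap preserves regularity and that the total work remains polynomial in~$n$.
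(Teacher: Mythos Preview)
Your high-level blueprint (LR partial refutations, four cases, learning a fresh $\Peb^\koplus$ clause each time case~\caseitiv{} fires) is the same as the paper's, but the structural invariant you propose for unfinished leaves is not the one the paper uses, and this is a genuine gap.

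You take the analog of a bipartite partial order to be an arbitrary partial assignment~$\pi$ to the $x_{v,i}$ variables, with the unfinished leaf clause being the disjunction of all negated literals set along the branch. The paper does \emph{not} do this. Instead, the invariant (condition~d.) is much more restrictive: every unfinished leaf must be a clause in $E^\koplus$ where $E$ is one of exactly three shapes, namely $x_w$, or $\negx_u\lor x_w$ with $u$ an ancestor of~$w$, or $\negx_u\lor\negx_v\lor x_w$ with $u,v$ independent ancestors of~$w$, together with a condition that $C^+$ contains no literal $x_{a,i}$ for $a$ in the relevant subgraph $(G\rest w)[u,v]\setminus\{u,v,w\}$. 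In other words, an unfinished leaf carries at most two ``blocked'' vertices $u,v$, not an arbitrary partial state. The analog of Lemma~\ref{BpoDerivationLm} is then Lemma~\ref{RegXorPebRefuteLm}, which gives regular derivations of exactly these three clause shapes from $\Peb_\alphabeta^\koplus$ over the appropriate subgraph.

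The reason this restriction matters is exactly the obstacle you flag at the end: under case~\caseitiv{}, the new unfinished leaves created by the branching subproof must again satisfy the invariant. With the paper's invariant this is nontrivial and occupies most of the proof: for each of the three shapes I/II/III of~$C$, and for each possible relationship between $u,v$ and the vertices $a,b,e$ of the clause~$D$ being learned, one must build a specific derivation skeleton (the displays (\ref{GPebProofEqA})--(\ref{GPebProofEqD}) in the paper) and place the side literals $\negU,\negV$ into exactly the right hypotheses so that every new leaf is again of shape I/II/III. This requires a case analysis on the dag structure (is $u$ an ancestor of~$a$? of~$b$? are $u,e$ independent ancestors of some fork vertex~$f$?) that your proposal does not contain and that your generic ``partial pebbling state'' invariant would not support. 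Your sketch of $P_\pi$ for an arbitrary~$\pi$ is also not obviously correct: once $\pi$ fixes variables at many scattered vertices, it is not clear that a regular derivation of the associated clause exists that avoids resolving on all of them.

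A minor point: you invoke $k=O(\log\log n)$ to control the $2^{O(k)}$ blowup, but the theorem is stated as polynomial in the size of $\GPeb^\koplus(G)$, which is itself $\Omega(2^{3k}n)$; the paper's bound $O(2^{11k}n^3)$ is polynomial in that size for all~$k$, with no assumption on~$k$ needed.
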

We conjecture that, analogously to Theorem~\ref{DPLLpolySizeThm},
the $\GPeb^\koplus(G)$ principles can be shown
unsatisfiable by polynomial size, greedy, unit propagating
DPLL clause learning; however, we have not attempted to prove this.

We make some simple observations
about working with xor-ified clauses
before proving Theorem~\ref{GPebProofsThm}.
\begin{lemma}\label{XorProofContraLm}
Let $u$ be a vertex in~$G$.
There is a tree-like regular
refutation of the clauses in $x_u^\koplus$ and~$\negx_u^\koplus$
with $2^k-1$ resolution inferences.  Its resolution variables
are the variables~$x_{u,i}$.
\end{lemma}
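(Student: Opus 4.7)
The plan is a simple induction on $k$. First I would observe that the union $x_u^\koplus \cup \negx_u^\koplus$ is exactly the set $W_k$ of all $2^k$ clauses of the form $x_{u,1}^{i_1}\lor x_{u,2}^{i_2}\lor\cdots\lor x_{u,k}^{i_k}$ with each $i_j\in\{+1,-1\}$: the parity condition in the definition of $x_u^\koplus$ and of $\negx_u^\koplus$ partitions these $2^k$ clauses into two halves of size $2^{k-1}$, and every such clause lies in one half or the other. In particular $W_k$ is unsatisfiable, since any assignment to the $k$ variables falsifies exactly one clause of $W_k$.

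The inductive claim is that $W_k$ has a tree-like regular refutation with exactly $2^k-1$ resolution inferences, using only variables among $x_{u,1},\ldots,x_{u,k}$. The base case $k=1$ is the single resolution step on $x_{u,1}$. For the inductive step, partition $W_k$ into the $2^{k-1}$ pairs that differ only in the sign of $x_{u,k}$, and resolve each pair on $x_{u,k}$. The resolvents are precisely the $2^{k-1}$ clauses in $W_{k-1}$, each obtained from exactly one pair (since a pair and its common resolvent determine each other). Apply the induction hypothesis to obtain a tree-like regular refutation of $W_{k-1}$ with $2^{k-1}-1$ inferences; graft onto each of its $2^{k-1}$ leaves the one-step derivation of the corresponding clause in $W_{k-1}$ from its pair in $W_k$. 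The result is a tree with $2^k$ leaves (each initial clause used exactly once) and $2^{k-1}+(2^{k-1}-1)=2^k-1$ internal resolution nodes.

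Tree-likeness is immediate from the construction: no clause is shared between subtrees. Regularity follows because the grafted bottom layer resolves only on $x_{u,k}$, while the inductive refutation of $W_{k-1}$ resolves only on $x_{u,1},\ldots,x_{u,k-1}$; so along any root-to-leaf path each $x_{u,i}$ is used at most once.

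There is no real obstacle here; the only small thing to check is that the pair-to-resolvent map in the inductive step is a bijection between the $2^{k-1}$ pairs and $W_{k-1}$, which is immediate from the fact that a width-$k$ clause together with its flip on $x_{u,k}$ is uniquely recovered from their common width-$(k-1)$ part. Hence the refutation has the claimed size, regularity, and resolution-variable set.
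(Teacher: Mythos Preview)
Your proof is correct and follows essentially the same idea as the paper's. The paper's own proof is a one-liner (``immediate by inspection: resolve on the literals $x_{u,i}$ successively for $i=1,2,\ldots,k$, giving a proof of height~$k$''), and your induction on~$k$ simply makes this explicit, building the same complete binary tree of height~$k$ with one variable resolved per level.
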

\begin{proof}
This is immediate by inspection: the refutation consists of resolving
on the literals $x_{u,i}$ successively for $i=1,2,\ldots,k$,
giving a proof of height~$k$.
The leaf clauses of the proof are the members of
$x_u^\koplus$ and $\negx_u^\koplus$.
\hfill $\qed$
\end{proof}
The refutation of Lemma~\ref{XorProofContraLm} can be viewed
as being the ``$\koplus$-translation'' of the proof
\begin{prooftree}
\AxiomC{$x_u$}
\AxiomC{$\negx_u$}
\BinaryInfC{$\bot$}
\end{prooftree}
The next lemma describes a similar ``$\koplus$-translation''
of a proof
\begin{prooftree}
\AxiomC{$C,x_u$}
\AxiomC{$D,\negx_u$}
\BinaryInfC{$C,D$}
\end{prooftree}
\begin{lemma}\label{XorProofContraSideLm}
Let $u$ be a vertex in~$G$, and let $C$ and~$D$ be clauses
which do not contain either
$x_u$ and~$\negx_u$.
Then each clause of $(C\lor D)^\koplus$ has a tree-like regular derivation
from the clauses in $(C\lor x_u)^\koplus$ and $(D\lor \negx_u)^\koplus$
in which the variables used as resolution variables
are exactly the variables $x_{u,i}$. This derivation has $2^k-1$
resolution inferences, and $2^k$ leaf clauses.
\end{lemma}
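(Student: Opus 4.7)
The plan is to lift the refutation from Lemma~\ref{XorProofContraLm} by carrying side literals through every node of its tree. Fix a target clause in $(C\lor D)^\koplus$; by the definition of $(-)^\koplus$ applied to a disjunction, this clause decomposes uniquely as $C'\lor D'$ with $C'\in C^\koplus$ and $D'\in D^\koplus$. The leaves of the derivation will be the $2^{k-1}$ clauses $C'\lor E$ for $E\in x_u^\koplus$ (all of which lie in $(C\lor x_u)^\koplus$) together with the $2^{k-1}$ clauses $D'\lor F$ for $F\in\negx_u^\koplus$ (all of which lie in $(D\lor\negx_u)^\koplus$), giving $2^k$ leaves in total.

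Next, I would take the tree-like regular refutation $\pi$ of $x_u^\koplus\cup\negx_u^\koplus$ provided by Lemma~\ref{XorProofContraLm}, which has these same $2^k$ clauses as leaves, uses only the variables $x_{u,1},\ldots,x_{u,k}$ as resolution variables, and ends with the empty clause after $2^k-1$ inferences. I would then modify $\pi$ by attaching side literals to every internal clause according to which leaves lie above it: a node whose subtree draws only from $x_u^\koplus$-leaves acquires $C'$ as a side literal, one whose subtree draws only from $\negx_u^\koplus$-leaves acquires $D'$, and one whose subtree draws from both acquires $C'\lor D'$. Since resolution on $x_{u,i}$ between two clauses whose non-resolved parts form a disjoint union preserves that union, each step of $\pi$ remains a valid resolution inference on the same $x_{u,i}$ after the side literals are added, and the root clause becomes exactly $C'\lor D'$.

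The regularity and counting claims are then immediate. The only resolution variables are $x_{u,1},\ldots,x_{u,k}$, each resolved on at most once along any root-to-leaf path in $\pi$. Regularity is preserved because the added side literals $C'$ and $D'$ use only variables $x_{v,j}$ with $v\ne u$, by the hypothesis that $x_u$ and $\negx_u$ do not occur in $C$ or $D$. The tree shape, leaf count, and inference count are inherited verbatim from $\pi$, yielding $2^k$ leaves and $2^k-1$ resolution inferences as claimed.

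I do not anticipate a genuine obstacle here: the construction is essentially the $\koplus$-translation of the single resolution step that derives $C\lor D$ from $C\lor x_u$ and $D\lor\negx_u$. The one point requiring a small amount of care is the bookkeeping of which side literals appear at each internal node of the modified tree, but as noted above this is fully determined by which types of leaves feed into that node and involves no nontrivial argument.
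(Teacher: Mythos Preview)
Your proposal is correct and follows essentially the same approach as the paper: fix the target clause, split it into its $C^\koplus$-part and $D^\koplus$-part, and carry these as side literals through the refutation of Lemma~\ref{XorProofContraLm}. The paper's version is marginally simpler in that it adds the full side clause $E=E_C\cup E_D$ to \emph{every} non-leaf node rather than tracking which subset is needed at each node, and it explicitly notes that $E_C$ and $E_D$ need not be disjoint when $C$ and $D$ share literals (so your word ``uniquely'' deserves a small caveat).
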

\begin{proof}
Fix a clause~$E$ from $(C\lor D)^\koplus$; we must
describe its derivation from clauses in
$(C\lor x_u)^\koplus$ and $(D\lor \negx_u)^\koplus$.
Let $E_C$ be the subclause of~$E$ which is
from $C^\koplus$, and let $E_D$ the subclause
of~$E$ which is from $D^\koplus$.  If $C$ and~$D$
have non-empty intersection, $E_C$ and~$E_D$ are not disjoint;
however, in any event, $E = E_C\cup E_D$.

Form the refutation from Lemma~\ref{XorProofContraLm}.  Then
add $E_C$ to every leaf clause from~$x_u^\oplus$, add $E_D$
to every leaf clause from~$\negx_u^\koplus$, and
add $E$ to every non-leaf clause.  This gives the desired
derivation of~$E$.
\hfill $\qed$.
\end{proof}
Lemma~\ref{XorProofContraSideLm}
lets us further generalize the construction of
$\koplus$-translations of proofs. As a typical example,
the next lemma gives the $\koplus$-translation
of the derivation
\begin{prooftree}
\AxiomC{$\negx_u,\negx_v, x_w$}
\AxiomC{$x_u$}
\BinaryInfC{$\negx_v,x_w$}
\AxiomC{$x_v$}
\BinaryInfC{$x_w$}
\end{prooftree}
\begin{lemma}\label{XorUVimpliesWLm}
Let $w$ be a vertex of~$G$, and $u$ and~$v$ its
predecessors.
Then, each clause in~$ x_w^\koplus$ has a dag-like resolution
derivation~$P$ from the clauses
in $x_u^\koplus$, $x_v^\koplus$,
and $(\negx_u \lor \negx_v\lor x_w)^\koplus$.
This derivation contains $<2^{2k}$ resolution inferences
and resolves on the literals $x_{u,i}$ and $x_{v,i}$.
In addition, the paths in~$P$ that lead to clauses
in $x_v^\koplus$ resolve on exactly the literals $x_{v,i}$.
\end{lemma}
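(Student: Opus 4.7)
The plan is to build $P$ as a two-level $\koplus$-translation of the two resolution steps
\begin{prooftree}
\AxiomC{$\negx_u,\negx_v, x_w$}
\AxiomC{$x_u$}
\BinaryInfC{$\negx_v,x_w$}
\AxiomC{$x_v$}
\BinaryInfC{$x_w$}
\end{prooftree}
by applying Lemma~\ref{XorProofContraSideLm} twice, once with pivot variable $x_u$ and once with pivot $x_v$.

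Fix a target clause $F \in x_w^\koplus$. First I would apply Lemma~\ref{XorProofContraSideLm} (taking $C = \emptyset$ and $D = x_w$) to build a tree-like derivation $P_2$ of $F$ from clauses in $x_v^\koplus$ and $(\negx_v\lor x_w)^\koplus$ that uses exactly $2^k-1$ resolutions, all on literals $x_{v,i}$, and has $2^k$ leaves: $2^{k-1}$ from $x_v^\koplus$ and $2^{k-1}$ from $(\negx_v\lor x_w)^\koplus$. For each of the $2^{k-1}$ leaves $E \in (\negx_v\lor x_w)^\koplus$ occurring in $P_2$, I would then apply Lemma~\ref{XorProofContraSideLm} again (this time with pivot $x_u$, $C = \emptyset$, $D = \negx_v\lor x_w$) to derive $E$ from clauses in $x_u^\koplus$ and $(\negx_u\lor\negx_v\lor x_w)^\koplus$, using $2^k-1$ resolutions on literals $x_{u,i}$. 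Splicing these derivations in at the $(\negx_v\lor x_w)^\koplus$-leaves of $P_2$ yields the derivation $P$.

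To count inferences, $P$ has the $2^k-1$ resolutions of $P_2$ plus $2^{k-1}(2^k-1)$ resolutions from the $2^{k-1}$ copies of the Lemma~\ref{XorProofContraSideLm}-derivations spliced in on top, giving a total of $(2^k-1)(2^{k-1}+1) = 2^{2k-1}+2^k-2^{k-1}-1 < 2^{2k}$ inferences, as claimed. By construction, the only resolution variables used are the $x_{u,i}$ (in the spliced top pieces) and the $x_{v,i}$ (in $P_2$), matching the statement.

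Finally, the path property: any leaf in $x_v^\koplus$ enters $P$ only as a leaf of $P_2$, not of any spliced top piece, so its unique path to the root lies entirely inside $P_2$ and therefore resolves on exactly the literals $x_{v,i}$, as required. The only genuinely fiddly point is bookkeeping which clauses $E\in(\negx_v\lor x_w)^\koplus$ actually occur as leaves of $P_2$ (so that we know which step-1 derivations must be spliced in), but this is immediate from the construction of the derivation in Lemma~\ref{XorProofContraSideLm} and requires no further work.
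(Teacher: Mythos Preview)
Your proposal is correct and follows exactly the approach the paper intends: the paper's proof is the single sentence ``Lemma~\ref{XorUVimpliesWLm} follows by applying Lemma~\ref{XorProofContraSideLm} twice,'' and you have simply spelled out those two applications (first on~$x_v$ to build~$P_2$, then on~$x_u$ spliced in above), together with the straightforward inference count and the path-to-$x_v^\koplus$ check. The construction, the bound $(2^k-1)(2^{k-1}+1)<2^{2k}$, and the observation that $x_v^\koplus$-leaves lie entirely inside~$P_2$ are all correct.
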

Lemma~\ref{XorUVimpliesWLm} follows by applying
Lemma~\ref{XorProofContraSideLm} twice. \hfill $\qed$
\smallskip

It is important to note that the left-to-right order
of the leaves of the
derivation of Lemma~\ref{XorUVimpliesWLm} can
be altered by changing the left-right
order of hypotheses
of resolution inferences.  In particular, given any
leaf clause~$D$ of a refutation~$P$, we can
order the hypotheses of the resolution inferences
so that $D$ is the leftmost leaf clause.  This will
be useful when $D$ needs to be
learned.

\begin{definition}
$G\rest w$ is the induced subgraph of~$G$ with sink~$w$ and
containing those vertices from which the vertex~$w$ is reachable.
$G[w]$ is the graph resulting from~$G$ by making the vertex~$w$
a leaf by removing its incoming edges, and then removing those vertices
from which the sink vertex of~$G$ is no longer reachable.

The vertex~$u$ is an {\em ancestor} of~$w$ if $u\not= w$
and $u\in G\rest w$.
We call $u$ and $v$ {\em independent ancestors} of~$w$
provided $u$, $v$, and $w$ are distinct
and $u\in (G\rest w)[v]$ and $v\in(G\rest w)[u]$.
In this case, we write
$G[u,v]$ for $G[u][v]=G[v][u]$.  Sometimes $v$, and possibly also~$u$,
may be missing or undefined; in these cases, $G[u,v]$
means just $G$ if both $u$ and~$v$ are undefined, and
means $G[u]$ if $u$ is present but $v$ is undefined.
\end{definition}

Note that it is possible for $u$ and~$v$ to be
independent ancestors of~$w$, and also have $u$
an ancestor of~$v$ or vice-versa.

The polynomial size regular resolution refutations of the
$\Peb^\koplus(G)$ principles also apply to
subgraphs such as $G\rest w$, $(G\rest w)[u]$
and $(G\rest w)[u,v]$.  We can use these
refutations to prove
the following lemma.
We write $\Peb_\alphabeta^\koplus(G)$ to
denote the $\koplus$-translations
of $\Peb(G)$ clauses of type ($\alpha$) and~($\beta$), omitting
the clauses of type~($\gamma$),
and similarly for $\GPeb_\alphabeta^\koplus(G)$.

\begin{lemma}\label{RegXorPebRefuteLm}
\begin{description}
\item[\rm (i)] Let $w$ be a vertex of~$G$.  Then
each clause in~$x_w^\koplus$ has
a regular resolution derivation from
the clauses $\Peb_\alphabeta^\koplus(G\rest w)$.
The derivation uses only
resolution variables of the
form~$x_{a,i}$ with $a\in (G\rest w) \setminus\{w\}$.
\item[\rm (ii)] Let $u$ and~$w$ be distinct vertices of~$G$
such that $u$ is an ancestor of~$w$.  Then
each clause in $(\negx_u \lor x_w)^\koplus$
has a regular resolution derivation from
the clauses $\Peb_\alphabeta^\koplus((G\rest w)[u])$.
The derivation uses only resolution variables of the
form~$x_{a,i}$ for
$a\in ((G\rest w)[u])\setminus\{u,w\}$.
\item[\rm (iii)] Let $u$ and~$v$ be
independent ancestors of~$w$.
Then each clause of $(\negx_u\lor \negx_v\lor x_w)^\koplus$
has a regular resolution derivation from
the clauses $\Peb_\alphabeta^\koplus((G\rest w)[u,v])$.
The derivation uses only resolution variables of
the form $x_{a,i}$ for
$a\in ((G\rest w)[u,v])\setminus\{u,v,w\}$.
\end{description}
\noindent
In all three cases, the derivation is dag-like and
has size $O(2^{3k}n)$ and height $O(kn)$.
\end{lemma}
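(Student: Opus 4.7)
The plan is to prove all three parts simultaneously by induction on the number of vertices in the relevant subgraph---either $G\rest w$, $(G\rest w)[u]$, or $(G\rest w)[u,v]$ as appropriate. The base cases are essentially immediate: for (i), when $w$ is a source of~$G$, the clauses of $x_w^\koplus$ are themselves axioms of $\Peb_\alphabeta^\koplus(G\rest w)$; for (iii), when $u$ and~$v$ are the two predecessors of~$w$ in~$G$, the clauses of $(\negx_u\lor\negx_v\lor x_w)^\koplus$ are already $\beta$-axioms; and for (ii) the minimal case combines the $\beta$-axiom for~$w$ with the $x_v^\koplus$ axiom via Lemma~\ref{XorProofContraSideLm}, where $u$ is a direct predecessor of~$w$ and the other predecessor~$v$ is a source.

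For the inductive step, I would let $u'$ and~$v'$ be the two predecessors of~$w$ in the relevant subgraph. The $\beta$-axiom $(\negx_{u'}\lor\negx_{v'}\lor x_w)^\koplus$ then lies in the hypothesis clause set. For each of $u'$ and~$v'$, I would inspect whether $u$ (and, in part~(iii), also~$v$) is or is not an ancestor of it in the subgraph, then apply the applicable one of (i), (ii), or~(iii) recursively to derive one of $x_{u'}^\koplus$, $(\negx_u\lor x_{u'})^\koplus$, $(\negx_v\lor x_{u'})^\koplus$, or $(\negx_u\lor\negx_v\lor x_{u'})^\koplus$, and similarly for~$v'$, from a strictly smaller subgraph. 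The combining step is a natural extension of Lemma~\ref{XorUVimpliesWLm}: the $\beta$-axiom is resolved against the two sub-derived clause families in succession on the variables $x_{v',i}$ and $x_{u',i}$, with any side-literal portions $C_u\in\negx_u^\koplus$ (and $C_v\in\negx_v^\koplus$ in (iii)) fixed to coincide across the two subproofs in the coordinated manner of Lemma~\ref{XorProofContraSideLm}.

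The main obstacle will be twofold. First, I will need to verify that the axioms required by each recursive call are available among the current level's axioms, i.e., that $\Peb_\alphabeta^\koplus$ of the sub-subgraph is contained in $\Peb_\alphabeta^\koplus$ of the parent. This is precisely why parts~(ii) and~(iii) are formulated using the modified graphs $(G\rest w)[u]$ and $(G\rest w)[u,v]$ rather than $G\rest w$: when $u$ or~$v$ is an ancestor of an intermediate predecessor, the $[u]$ or $[u,v]$ restriction is what removes the unwanted $x_u^\koplus$ or $x_v^\koplus$ from the sub-axiom set, and the independence assumption in~(iii) is what ensures that $G[u,v]$ is well-defined. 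Second, regularity will need care: each subderivation's resolution variables lie by inductive hypothesis in a specified vertex set that excludes the distinguished vertices $u$, $v$, and~$w$ as appropriate, whereas the combining step itself resolves only on the variables $x_{u',i}$ and $x_{v',i}$ for $u',v'\notin\{u,v,w\}$; since these two kinds of variable sets are disjoint and any root-to-leaf path in the resulting dag enters at most one of the two subderivations at each branching node, no variable will be resolved twice along any single path.

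For the size and height bounds: each vertex of the subgraph contributes one combining step which by Lemma~\ref{XorUVimpliesWLm} uses $O(2^{2k})$ inferences per output clause, and since each $\koplus$-set has $2^{k-1}$ clauses while the subgraph has at most~$n$ vertices, this gives $O(2^{3k}n)$ total inferences; the $O(k)$ height contributed per vertex sums to $O(kn)$, as claimed.
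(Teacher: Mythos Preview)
Your inductive construction is correct in outline and in fact produces the same dag as the paper, but the paper organizes the argument differently: it first builds the non-xorified regular derivation of $x_w$ (respectively $\negx_u\lor x_w$, $\negx_u\lor\negx_v\lor x_w$) from $\Peb_\alphabeta$ of the appropriate subgraph---an ``obvious'' $O(n)$-size construction following a depth-first traversal of that subgraph---and then applies the $\koplus$-translation once, wholesale.  Since every clause in the non-xorified proof has at most three literals, the $O(2^{3k}n)$ size and $O(kn)$ height bounds are immediate, and regularity of the translation is inherited from regularity of the original.

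Your direct induction, by contrast, has a gap in the regularity argument.  You assert that the combining-step variables $\{x_{u',i},x_{v',i}\}$ are disjoint from each subderivation's resolution variables, but this fails whenever one predecessor is an ancestor of the other.  If $v'$ is an ancestor of~$u'$, the subderivation producing the $x_{u'}^\koplus$ clauses (or the $(\negx_u\lor x_{u'})^\koplus$ clauses, etc.) resolves on the~$x_{v',i}$'s by your own inductive hypothesis; and in the combining structure of Lemma~\ref{XorUVimpliesWLm} the path reaching that subderivation has already resolved on the~$x_{v',i}$'s as well as the~$x_{u',i}$'s, so regularity is violated.  The repair is easy---order the two blocks of resolutions in the combining step so that the predecessor that is \emph{not} an ancestor of the other is eliminated first; then the path into the deeper subderivation passes through only one of the two blocks of combining variables, the one that cannot recur inside.  (Since $G$ is acyclic, at most one of $u',v'$ can be an ancestor of the other.)  This same ordering choice is implicit in the paper's ``depth-first traversal'' phrasing, but the two-stage organization means it is checked once, at the level of single variables~$x_a$, rather than threaded through the xorified induction.
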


\begin{proof}
We describe regular refutations from $\Peb(G)$.
Their $\koplus$-translations will give the regular
derivations of Lemma~\ref{RegXorPebRefuteLm}.

For~(i), there is an obvious regular dag-like, size $O(n)$,
derivation~$P$ of~$x_w$ from the clauses ($\alpha$) and~($\beta$)
of the (non-xorified)
principle $\Peb(G)$; the derivation proceeds by resolving on literals~$x_u$
in a depth-first traversal of~$G$.  Forming the
$\koplus$-translation $P^\koplus$ of~$P$ forms the desired derivation for part~(i)
of any given clause of~$x_w^\koplus$.
Since $P$ has size $O(n)$ and clauses in~$P$ have at most three literals,
$P^\koplus$ has size $O(2^{3k}n)$ and height $O(kn)$.
The limitation on which variables can be used as resolution
variables follows by inspection; it is also a consequence
of the fact that the refutation is
regular.

The proofs of (ii) and~(iii) are similar.
\hfill $\qed$
\end{proof}

\begin{proof} (of Theorem~\ref{GPebProofsThm}.)
We again construct a finite
sequence of ``LR partial refutations'',
denoted $R_0, R_1, R_2, \ldots$.  This terminates
after finitely many steps with the desired refutation~$R$.
Each LR partial refutation~$R_i$ will be a correct
regRTI proof (and thus a correct pool resolution refutation);
furthermore, it will satisfy the following conditions:
\begin{description}
\item[\rm a.] $R_i$ is a tree of nodes labeled with clauses.
The root is labeled
with the empty clause.  Each non-leaf node in~$R_i$ has a left
child and a right child, and the clauses labeling these nodes
form a valid resolution inference.
\item[\rm b.] For each clause~$C$ in~$R_i$, the clause~$C^+$
is defined as before as
\begin{eqnarray*}
C^+ & := & \{\ell : \hbox{The literal~$\ell$
occurs in some clause on the branch} \\
& & \quad \quad \quad \quad \quad
\hbox{from the root node of~$R_i$ up to and including~$C$}\}.
\end{eqnarray*}
\item[\rm c.] Each leaf of~$R_i$ is either ``finished'' or ``unfinished''.
Each finished node leaf~$L$ is labeled with either a
clause from $\GPeb^\koplus(G)$ or with a clause that was derived by
an input subderivation of~$R_i$ to the left of~$L$ in postorder.
The input subderivation
may not contain any unfinished leaves.
\item[\rm d.] Each unfinished leaf is labeled with
a clause~$C \in E^\koplus$ for a clause~$E$
such that one of the following three possibilities I.-III.{} holds.
I.~$E$~is of the form~$x_w$, and $C+$~contains no
literal $x_{a,i}$ for vertex
$a\in (G \rest w)\setminus\{w\}$.
II.~$E$~is of the form $\negx_u\lor x_w$ with $u$ an ancestor of~$w$,
and $C+$~contains no
literal $x_{a,i}$ for vertex
$a\in (G \rest w)[u]\setminus\{u,w\}$. Or,
III.~$E$~is of the form $\negx_u\lor \negx_v\lor x_w$
with $u$ and~$v$ independent ancestors of~$w$,
and $C^+$ contains no literal
$x_{a,i}$ for any vertex
$a\in (G \rest w)[u,v]\setminus\{u,v,w\}$.
\end{description}

We introduce a new notational
convention to describe (sub)clauses in~$R_i$.  For $w$ a vertex
in~$G$, the notation $W$ or $W^\prime$ denotes a
clause in $x_w^\koplus$, and $\negW$ or~$\negW^\prime$ to denotes
a clause in $\negx_w^\koplus$.  The notation $\negW$ or $\negW^\prime$
in no way denotes the negation of $W$ or~$W^\prime$; instead, they
are names of clauses, with the overline
meant only to serve as a reminder of the semantic meaning.

The initial refutation~$R_0$
is formed as follows.
Let $Q$ be the $\koplus$-translation of the
inference
\begin{prooftree}
\AxiomC{$\negx_t$}
\AxiomC{$x_t$}
\BinaryInfC{$\bot$}
\end{prooftree}
as given by Lemma~\ref{XorProofContraLm}, where $t$ is
the sink of~$G$.  There are $2^k$
leaf clauses of~$Q$:
half of them are labeled with a clause $\negT\in \negx_t^\koplus$
and the other half are labeled with a clause $T\in x_t^\koplus$.
Form $R_0$ from~$Q$ by replacing each leaf clause~$\negT$ with
a derivation
\begin{prooftree}
\AxiomC{$\negT, \rho(\negT)$}
\AxiomC{$\negT,\overline{\rho(\negT)}$}
\BinaryInfC{$\negT$}
\end{prooftree}
These inferences are regular,
since $\rho(\negT)$ is not an~$x_{t,i}$.
The other leaf
clauses, of the form~$T$, satisfy condition~d. and are unfinished
clauses in~$R_0$.

For the inductive step, the LR partial refutation~$R_i$
will be transformed into~$R_{i+1}$.  There are several
cases to consider; the goal is to replace
one unfinished leaf by a derivation containing only
finished leaves, or to learn one more $\Peb^\koplus(G)$
clause while
adding only polynomially many more unfinished leaves.

Consider the leftmost unfinished leaf of~$R_i$. By condition~d.,
its clause~$C$
will have one of the forms $W$, or $\negU,W$, or $\negU,\negV,W$
where $\negU\in\negx_u^\koplus$, $\negV\in \negx_v^\koplus$, and
$W\in x_w^\koplus$.
By Lemma~\ref{RegXorPebRefuteLm},
there is a dag-like regular refutation~$P$ of~$C$ from the clauses
of $\Peb_\alphabeta^\koplus((G\rest w)[u,v])$.  We wish to convert~$P$
into a derivation from the clauses of
$\GPeb_\alphabeta^\koplus((G\rest w)[u,v])$ and the already learned
clauses of~$R_i$.   Consider
each leaf clause~$D$ of~$P$.  Then $D$ is a $\koplus$-translation
of a clause in~$\Peb_\alphabeta^\koplus((G\rest w)[u,v])$.
As in the proof of Theorem~\ref{PoolResGgtThm}, there are four cases
to consider:
\begin{description}
\item[\caseiti] If the clause~$D$ is already learned as an input
lemma in~$R_i$ to the left of~$C$, then $D$ may be used in~$P$ as is.
\end{description}
For the remaining cases, assume $D$ has not been learned as an input lemma.
\begin{description}
\item[\caseitii]  Let $y = \rho(D)$.  If either $y$ or~$\negy$
is a member of~$C^+$, then add that literal to~$D$ and every clause
on the path below~$D$ until reaching the first clause where it appears.
This replaces $D$ with the $\GPeb_\alphabeta^\koplus((G\rest w)[u,v])$
clause $D\lor y$ or $D\lor\negy$.
\item[\caseitiii] Suppose cases \caseiti{} and~\caseitii{} do not apply
and that $y$ is not used as a resolution variable below~$D$.
In this case, replace $D$ by a resolution inference deriving
$D$ from $D\lor y$ and $D\lor \negy$.  This preserves
the regularity of the derivation. It also makes $D$ a learned clause.
\end{description}
It is possible that $C$ itself is a $\Peb_\alphabeta^\koplus(G)$ clause.
If so, then $C=D$ and $P$ is the trivial derivation containing only~$C$, and
one of cases \caseiti{}-\caseitiii{} holds.

If all leaf clauses~$D$ of $P$ can be treated by cases
\caseiti{}-\caseitiii{}, then we have successfully transformed~$P$
into a (still dag-like) derivation~$P^\prime$ which satisfies
regularity and in which leaf clauses are from $\GPeb^\koplus(G)$ or
already learned as input lemmas in~$R_i$.  By Theorem~3.3
of~\cite{BHJ:ResTreeLearning}, $P^\prime$~can be converted in a regRTI
proof~$P^\pprime$ of the same conclusion as~$P$, preserving the
regularity conditions, and with the size of~$P^\prime$ bounded by
twice the product of the size of~$P$ and the height of~$P$.
Therefore, the size of~$P^\pprime$ is $O((2^{3k}n)(kn)) = O(k 2^{3k}
n^2)$.  Form~$R_{i+1}$ by replacing the clause~$C$ in~$R_i$ with the
derivation~$P^\pprime$.  $R_{i+1}$~satisfies conditions a.-d., and has
one fewer unfinished clauses than~$R_i$.

However, if even one leaf clause~$D$ of $P$ fails
cases \caseiti{}-\caseitiii{}, then
the entire subderivation~$P$ is abandoned,
and we chose some leaf clause~$D$
of~$P$ to be learned
such that $D$ does not fall into cases \caseiti{}-\caseitiii.

The leaf clause~$D$ is
the $\koplus$-translation of an ($\alpha$) or~($\beta$)
clause of $\Peb^\koplus(G)$ and thus
either has the form $E\in x_e^\oplus$ for
some source~$e$ in~$G$ or
has the form $\negA,\negB,E$ where
$\negA \in \negx_a^\koplus$, $\negB \in \negx_b^\koplus$, and
$E\in x_e^\koplus$ for $a$, $b$, and $e$ vertices in~$G\rest w$ with
$a$ and~$b$ the two predecessors of~$e$ in~$G$.  Without
loss of generality, $b$~is not an ancestor of~$a$ in~$G$; otherwise
interchange $a$ and~$b$.  The construction now splits into three cases
I., II., and~III.{} depending on the form of~$C$.
These three cases each split into two subcases depending whether
$D$ is $E$ or is $\negA,\negB,E$.

I.\ In the first case, $C$ is equal to just $W$.  Recall
that $W$ is a clause in~$x_w^\koplus$.  First suppose
$D$ is equal to~$E$.  Consider the
derivation structure
\begin{equation}\label{GPebProofEqAalpha}
\AxiomC{$x_e$}
\AxiomC{$\negx_e, W$}
\BinaryInfC{$W$}
\DisplayProof
\end{equation}
Note that (\ref{GPebProofEqAalpha}) contains a blend of
variables
from $\Peb(G)$ (non-xorified) and from $\Peb^\koplus(G)$ (xor-ified).
However, we can still form its $\koplus$-translation~$Q$:
the leaf clauses of~$Q$ are the $2^k$ clauses of the
form $E^\prime \in x_e^\koplus$  and of the form
$\overline E^\prime, W$ for $\overline E^\prime\in \negx_e^\koplus$.
By choosing the appropriate left-right order for the
hypotheses in~$Q$, we arrange for $D=E$ to be the leftmost leaf
clause of~$Q$. Let $y = \rho(D)$.  Then $y$ is not one of the variables~$x_{e,i}$,
nor is $y$ or $\negy$ in~$C^+$ since condition~\caseitii{} does not hold for~$D$.
Therefore, we can modify~$Q$ by replacing $D=E$ with
\begin{equation}\label{LearnDclauseEq}
\AxiomC{$D,\rho(D)$}
\AxiomC{$D,\overline{\rho(D)}$}
\BinaryInfC{$D$}
\DisplayProof
\end{equation}
Form $R_{i+1}$ from~$R_i$ by replacing $C$ with the modified~$Q$.
This causes $D$ to become learned as an input lemma in~$R_{i+1}$.
The other leaf clauses of~$Q$ all satisfy condition~d.{} in~$R_{i+1}$
and thus become unfinished clauses of~$R_{i+1}$: they are all to the right
of~$D$.  This adds fewer than $2^k$ new unfinished clauses to~$R_{i+1}$.

Second suppose $D$ equals $\negA,\negB,E$.
For the moment, assume $e\not= w$.
Consider the following
derivation structure:
\begin{equation}\label{GPebProofEqA}
\AxiomC{$\negx_a,\negx_b, x_e$}
\AxiomC{$x_a$}
\BinaryInfC{$\negx_b,x_e$}
\AxiomC{$x_b$}
\BinaryInfC{$x_e$}
\AxiomC{$\negx_e,W$}
\BinaryInfC{$W$}
\DisplayProof
\end{equation}
Define $Q$ to be the $\koplus$-translation
of~(\ref{GPebProofEqA}), so $Q$ is a
tree-like derivation of the clause~$W$ from the clauses
of the form $\negA^\prime,\negB^\prime,E^\prime$,
the form $A^\prime$, the form $B^\prime$,
and the form $\negE^\prime,W$, where the clauses
$A^\prime$, $\negA^\prime$, $B^\prime$, $\negB^\prime$,
$E^\prime$, and $\negE^\prime$ range over all
members of $x_a^\koplus$, $\negx_a^\koplus$,
$x_b^\koplus$, $\negx_b^\koplus$, $x_e^\koplus$,
and $\negx_e^\koplus$, respectively.

The left-right order of hypotheses in~$Q$ is chosen so
that the clause~$D$ is the leftmost leaf clause of~$Q$.
The variable $y=\rho(D)$
is not one of the $x_{a,i}$'s, $x_{b,i}$'s, or $x_{e,i}$'s.
Since condition~\caseitii{} does not hold,
neither $y$ nor~$\negy$ is in~$C^+$.
Therefore, we again replace the clause~$D$ in~$Q$ with the
inference~(\ref{LearnDclauseEq}).
Form $R_{i+1}$ by replacing
the unfinished clause~$C$ in~$R_i$ with the
derivation~$Q$.  The inference (\ref{LearnDclauseEq})
does not violate regularity, and causes $D$ to be
learned as an input lemma in~$R_{i+1}$.

It is easy to check that the remaining leaf
clauses of~$Q$, which have the forms $\negA^\prime$, and
$\negB^\prime$, and $\negE^\prime,W$,
and $\negA^\prime,\negB^\prime,E^\prime$,
satisfy condition~d.
These are thus valid as new unfinished clauses in~$R_{i+1}$.
A simple calculation shows that there are
$2\cdot 2^{3(k-1)}+2^{2(k-1)}+2^{k-1} < 2^{3k}$ many of these clauses.

On the other hand, suppose $e=w$.  In this case, use
the derivation structure
\begin{prooftree}
\AxiomC{$\negx_a,\negx_b, W$}
\AxiomC{$x_a$}
\BinaryInfC{$\negx_b,W$}
\AxiomC{$x_b$}
\BinaryInfC{$W$}
\end{prooftree}
instead of~(\ref{GPebProofEqA}).  Let $Q$ be the
$\koplus$-translation of this, and argue again as
above.  We leave the details to the reader.

This completes the construction of~$R_{i+1}$ in
the case where $C$ is just~$W$.  At least one new
clause,~$D$, from $\Peb^\koplus(G)$ has been learned as
an input lemma. Since $D$ is learned in the
leftmost branch of~$Q$, it is
available for use as a learned clause for all future
unfinished clauses.
Fewer than $2^{3k}$ many new unfinished
leaves have been introduced.

II.\ Now consider the case where $C$ has the form $\negU,W$.
First suppose $D$ is equal to $E\in x_e^\oplus$, where $e$
is a source of~$G$.  We have $e\not= u$ since, even if $u$
is a leaf, $P$~does
not use the axiom of type~$(\alpha)$ for~$U$.  Thus $e$ and~$u$
are independent ancestors of~$w$.
Consider the derivation structure
\begin{equation}\label{GPebProofEqBalpha}
\AxiomC{$x_e$}
\AxiomC{$\negx_e,\negU,W$}
\BinaryInfC{$\negU, W$}
\DisplayProof
\end{equation}
Let $Q$ be the $\koplus$-translation of~(\ref{GPebProofEqBalpha})
arranged so that $D$ is its leftmost leaf clause.
Now argue as in the earlier cases to form~$R_{i+1}$.

Second, suppose $D$ is equal to $\negA,\negB,E$.
We suppose that $e\not= w$, and leave the (slightly simpler)
case of $e=w$ to the reader.\footnote{The idea is
to modify~(\ref{GPebProofEqB}) by omitting its final
inference and changing the $x_e$'s to~$W$'s.}
For the moment, suppose $u$ is not equal to $a$ or~$b$.  Again, it
is not possible for $u$ to equal~$e$, since $P$ does
not use the clause of type~($\beta$) for $u$ and its predecessors.
Consider the derivation structure
\begin{equation}\label{GPebProofEqB}
\AxiomC{$\negx_a,\negx_b, x_e$}
\AxiomC{$\qmark_1,x_a$}
\BinaryInfC{$\qmark_1,\negx_b,x_e$}
\AxiomC{$\qmark_2,x_b$}
\BinaryInfC{$\qmark_1,\qmark_2,x_e$}
\AxiomC{$\qmark_3,\negx_e,W$}
\BinaryInfC{$\negU,W$}
\DisplayProof
\end{equation}
Here the subclauses $\qmark_1$, $\qmark_2$, $\qmark_3$ are unknown and must be
determined: in fact, one of them will be $\negU$ and the
other two will be empty.  The rules for
determining these are as follows. First, if $u$ is an ancestor
of~$a$ (and thus not equal to~$a$), select $\qmark_1$ to equal~$\negU$.
Second and otherwise, if $u$ is an ancestor of~$b$,
select $\qmark_2$ to equal~$\negU$.  Otherwise,
$u$ and~$e$ are independent ancestors of~$w$, so
select $\qmark_3$ to equal~$\negU$.  For example, in the second case,
the derivation~(\ref{GPebProofEqB}) becomes
\[
\AxiomC{$\negx_a,\negx_b, x_e$}
\AxiomC{$x_a$}
\BinaryInfC{$\negx_b,x_e$}
\AxiomC{$\negU,x_b$}
\BinaryInfC{$\negU,x_e$}
\AxiomC{$\negx_e,W$}
\BinaryInfC{$\negU,W$}
\DisplayProof
\]
Now let $Q$ be the $\koplus$-translation of~(\ref{GPebProofEqB}),
with $D$ the leftmost leaf clause of~$Q$.
Modify $Q$ by replacing~$D$ with~(\ref{LearnDclauseEq}),
and then form $R_{i+1}$ by replacing $C$ with~$Q$.
The other leaf clauses of~$Q$ have the
form $A^\prime$ or $\negU,A^\prime$,
the form $B^\prime$ or $\negU,B^\prime$,
the form $\negE^\prime,W$ or $\negU,\negE^\prime,W$,
and the form $\negA^\prime,\negB^\prime,E^\prime$.
By examination
of which variables are used for resolution in~$Q$,
it is clear that condition~d.{} is satisfied for
these clauses, and thus they are valid unfinished clauses
in~$R_{i+1}$.

This completes the description of~$R_{i+1}$ in this
case.  The clause~$D$ has become newly learned, and $<2^{3k}$
many new unfinished clauses have been introduced.

Now suppose $u=a$, so $\negU\in\negx_a^{\koplus}$.
(The case where $u=b$ is similar.) In this case,
the clause~$D$ has the form $\negU,\negB,E$ for
some $\negB$ in~$\negx_b^\koplus$ and
some some $E$ in~$x_e^\koplus$.
Let
$Q$ be the $\koplus$-translation of
\begin{prooftree}
\AxiomC{$\negU,\negx_b,x_e$}
\AxiomC{$x_b$}
\BinaryInfC{$\negU,x_e$}
\AxiomC{$\negx_e,W$}
\BinaryInfC{$\negU,W$}
\end{prooftree}
One of the leaf clauses in~$Q$ is equal to the
clause~$D$ to be learned, and $Q$ is ordered so that $D$ is
its leftmost leaf clause.  As before, $D$ is
replaced in~$Q$ with~(\ref{LearnDclauseEq}) and
then $R_{i+1}$ is formed by replacing $C$ in~$R_i$
with~$Q$.
The clause~$D$ has become learned as an input lemma,
and $<2^{2k}$ many new
unfinished clauses have been introduced in~$R_{i+1}$.

III.\ Third, consider the case where $C$ has the form $\negU,\negV,W$.
W.l.o.g., $v$~is not an ancestor of~$u$.  First
suppose $D$ is $E\in x_e^\koplus$ for $e$ a source in~$G$.
We again have $e$ is not equal to $u$, $v$, or~$w$.
Since also $e\in (G\rest w)[u,v]$ and since $u$ and~$v$ are
independent ancestors of~$w$, there must exist
a path from $w$ to~$e$ that avoids $u$ and~$v$,
a path from $w$ to~$u$ that avoids $e$ and~$v$, and
a path from $w$ to~$v$ that avoids $e$ and~$u$.
By looking at the point where these
three paths first diverge, there is a vertex~$f$
that lies on exactly two of these paths.  Suppose,
for instance, that $f$ lies on the two paths from $w$
to~$u$ and to~$e$. (The other two cases are similar.)
Then $u$ and~$e$ are independent ancestors of~$f$, and
$f$~and $v$ are independent ancestors of~$w$.  Form
the derivation structure
\begin{equation}\label{GPebProofEqDalpha}
\AxiomC{$x_e$}
\AxiomC{$\negU,\negx_e,x_f$}
\BinaryInfC{$\negU,x_f$}
\AxiomC{$\negV,\negx_f,W$}
\BinaryInfC{$\negU,\negV,W$}
\DisplayProof
\end{equation}
Let $Q$ be the $\koplus$-translation
of~(\ref{GPebProofEqDalpha}) with $D$ as its leftmost clause
and form $R_{i+1}$ as before.

Second suppose $D$ is $\negA,\negB,E$.
We assume that $e\not= w$,
leaving the slightly simpler $e=w$ case to the reader.  (See the earlier footnote.)
Suppose for the moment that $u$ and~$v$ are distinct from $a$ and~$b$.
As before, they cannot equal~$e$.
Consider the derivation structure
\begin{equation}\label{GPebProofEqC}
\AxiomC{$\negx_a,\negx_b, x_e$}
\AxiomC{$\qmark_1,x_a$}
\BinaryInfC{$\qmark_1,\negx_b,x_e$}
\AxiomC{$\qmark_2,x_b$}
\BinaryInfC{$\qmark_1,\qmark_2,x_e$}
\AxiomC{$\qmark_3,\negx_e,W$}
\BinaryInfC{$\negU,\negV,W$}
\DisplayProof
\end{equation}
There are a number of subcases to consider; in each we describe how to set
$\qmark_1$, $\qmark_2$, and $\qmark_3$.  By default,
$\qmark_1$, $\qmark_2$, and $\qmark_3$ (when not specified otherwise)
are to be the empty clause.
Some of the subcases are overlapping, and when so,
either option may be used.  For example, it can happen
that $u$ is both an ancestor of~$e$ and a member of $(G\rest w)[e]$.
\begin{description}
\item[$\bullet$] Suppose $u$ is an ancestor of~$e$ and
and $v\in (G\rest w)[e]$.
\begin{description}
\item[$\circ$] If $u$ is an ancestor of~$a$,
set $\qmark_1:= \negU$ and $\qmark_3:=\negV$.
\item[$\circ$] Otherwise, $u$ is an ancestor of~$b$, and we
set $\qmark_2:= \negU$ and $\qmark_3:=\negV$.
\end{description}
\item[$\bullet$] Suppose $v$ is an ancestor of~$e$,
and $u\in (G\rest w)[e]$.  This is handled like the previous
case, interchanging $\negU$ and~$\negV$.
\item[$\bullet$] Suppose neither $u$ nor~$v$ is in $(G\rest w)[e]$.
\begin{description}
\item[$\circ$] If $u$ is an ancestor of~$a$ and $v$~an ancestor of~$b$,
set $\qmark_1:=\negU$ and $\qmark_2:=\negV$.
\item[$\circ$] If $u$ is an ancestor of~$b$ and $v$~an ancestor of~$a$,
set $\qmark_1:=\negV$ and $\qmark_2:=\negU$.
\item[$\circ$] If $u$ and~$v$ are independent ancestors of~$a$,
set $\qmark_1:= \negU,\negV$.
\item[$\circ$] If $u$ and~$v$ are independent ancestors of~$b$,
set $\qmark_2:= \negU,\negV$.
\end{description}
These four subcases cover all possibilities since $u$ and~$v$ are
independent ancestors of~$w$, and there is no path
from $w$ to either $u$ or~$v$ which avoids~$e$.
\end{description}
If any of the above subcases hold, form~$Q$ as the $\koplus$-translation
of~(\ref{GPebProofEqC}) with $D$ as its leftmost clause,
and form $R_{i+1}$ by exactly the same
construction as in the earlier cases.

If, however neither $u$ nor~$v$ is an ancestor of~$e$,
then (\ref{GPebProofEqC}) cannot be used.  Since also
$e\in (G\rest w)[u,v]$, and $u$ and~$v$ are independent ancestors
of~$w$, there must be a path from $w$ to~$u$ that avoids
$v$ and~$e$, a path from $w$ to~$v$ that avoids
$u$ and~$e$, and a path from $w$ to~$e$ that avoids
$u$ and~$v$.  By looking at the point where these
three paths first diverge, there is a vertex~$f$
that lies on exactly two of these paths.  Suppose again
that $f$ lies on the two paths from $w$
to~$u$ and to~$e$. (The other two cases are similar.)
Then, $u$ and~$e$ are independent ancestors of~$f$, and
$f$~and $v$ are independent ancestors of~$w$.  Form
the derivation structure
\begin{equation}\label{GPebProofEqD}
\AxiomC{$\negx_a,\negx_b, x_e$}
\AxiomC{$x_a$}
\BinaryInfC{$\negx_b,x_e$}
\AxiomC{$x_b$}
\BinaryInfC{$x_e$}
\AxiomC{$\negU,\negx_e,x_f$}
\BinaryInfC{$\negU,x_f$}
\AxiomC{$\negV,\negx_f,W$}
\BinaryInfC{$\negU,\negV,W$}
\DisplayProof
\end{equation}
Let $Q$ be the $\koplus$-translation of~(\ref{GPebProofEqD})
with $D$ as its leftmost leaf clause.
Replace $D$ with~(\ref{LearnDclauseEq}), and then form $R_{i+1}$
by replacing with $C$ with~$Q$.
The other leaf clauses of~$Q$ including those
of the form
$\negV,\negF^\prime,W$
and of the form
$\negU, \negE^\prime,F^\prime$ become valid unfinished clauses
that satisfy condition~d.
The clause $D$ has become learned as an input lemma,
and $R_{i+1}$ has
$<2^{4k}$ new
unfinished clauses.

We still have to consider the cases where $u$, $v$, $a$
and~$b$ are not distinct.  There are four (very similar)
cases where only one of $u$ and~$v$ is in $\{a,b\}$.
For instance, suppose that $u=a$ and
$v\not= b$.  Then form $Q$ as the $\koplus$-translation
of
\begin{prooftree}
\AxiomC{$\negU,\negx_b,x_e$}
\AxiomC{$\qmark_1,x_b$}
\BinaryInfC{$\qmark_1,\negU,x_e$}
\AxiomC{$\qmark_2,\negx_e,W$}
\BinaryInfC{$\negU,\negV,W$}
\end{prooftree}
where $\qmark_1:=V$ if $v$ is an ancestor of~$b$, and $\qmark_2:=V$
if $v$ is not an ancestor of~$b$.   Order $Q$ so that $D$ is
its leftmost leaf clause, and form $R_{i+1}$ as before.

In the case where $u=a$ and $v=b$, or vice-versa,
let $Q$ instead be the $\koplus$-translation of
\begin{prooftree}
\AxiomC{$\negU,\negV,x_e$}
\AxiomC{$\negx_e,W$}
\BinaryInfC{$\negU,\negV,W$}
\end{prooftree}
and proceed as before.

This completes case~III.{} and
the construction of $R_{i+1}$ from~$R_i$.

In cases \caseiti-\caseitiii{}, an unfinished leaf is completely
handled without adding a new unfinished clause.
In cases~\caseitiv{}, a
new $\Peb^\koplus(G)$ clause is learned as an input lemma
while adding
fewer than $2^{4k}$ many new unfinished leaves.
There are fewer than $n 2^{3(k-1)}$ many clauses that can be
learned.
Therefore the process of forming~$R_i$'s
terminates with a refutation~$R$
after a finite number of steps. The refutation~$R$ is a valid
regRTI refutation of the $\GPeb^\koplus(G)$ clauses.

To estimate the size of~$R$, note further that
each time
an unfinished leaf is handled by cases \caseiti{}-\caseitiii{}
at most $k 2^{3k} n^2$ many clauses are added.   Therefore,
the refutation~$R$ has size
$O((2^{3k}n)(2^{4k})(k 2^{3k}n^2) = O(2^{11k} n^3)$.
The overall size of the $\GPeb^\koplus(G)$ clauses
is $\Omega( 2^{3k} n)$.  Thus the size of~$R$ is polynomially
bounded by the size of the $\GPeb^\koplus(G)$ principle; in fact,
it is bounded by a degree four polynomial.

This completes the proof of Theorem~\ref{GPebProofsThm}.
\hfill $\qed$
\end{proof}

\bibliographystyle{siam}
\bibliography{logic,cstheory}

\end{document}